\algnewcommand\algorithmicparfor{\textbf{for}}
\algnewcommand\algorithmicpardo{\textbf{do in parallel}}
\algnewcommand\algorithmicforeach{\textbf{for each}}
\algnewcommand{\IfThenElse}[3]{
  \State \algorithmicif\ #1\ \algorithmicthen\ #2\ \algorithmicelse\ #3}
\renewcommand{\Function}[2]{%
  \csname ALG@cmd@\ALG@L @Function\endcsname{#1}{#2}%
  \def\jayden@currentfunction{#1}%
}
\newcommand{\funclabel}[1]{%
  \@bsphack
  \protected@write\@auxout{}{%
    \string\newlabel{#1}{{\jayden@currentfunction}{\thepage}}%
  }%
  \@esphack
}
\newtheorem{theorem}{Theorem}
\newtheorem{lemma}{Lemma}
\newtheorem{definition}{Definition}
\newcommand{\beq}{\begin{equation}}
\newcommand{\eeq}{\end{equation}}
\newcommand{\bea}{\begin{eqnarray}}
\newcommand{\eea}{\end{eqnarray}}
\begin{document}

\title{RECEIPT: REfine CoarsE-grained IndePendent Tasks for \\Parallel Tip decomposition of Bipartite Graphs}


\author{Kartik Lakhotia, Rajgopal Kannan, Viktor Prasanna}
\affiliation{%
  \institution{Ming Hsieh Department of Electrical Engineering\\ University of Southern California}
}
\email{{klakhoti, rajgopak, prasanna}@usc.edu}

\author{Cesar A. F De Rose}
\affiliation{%
  \institution{School of Technology\\ Pontifical Catholic University of Rio Grande do Sul}
}
\email{cesar.derose@pucrs.br}





\begin{abstract}
Tip decomposition is a crucial kernel for mining dense subgraphs in bipartite networks, with applications in spam detection, analysis of affiliation networks etc.
It creates a hierarchy of vertex-induced subgraphs with varying densities determined by the participation of vertices in butterflies ($2,2-$bicliques).
To build the hierarchy, existing algorithms iteratively follow a {\it delete-update}(peeling) process: deleting  vertices with the minimum number of butterflies and correspondingly updating the butterfly count of their 2-hop neighbors. The need to explore 2-hop neighborhood renders tip-decomposition computationally very expensive. Furthermore, the inherent sequentiality in peeling only minimum butterfly vertices makes derived parallel algorithms prone to heavy synchronization. 

In this paper, we propose a novel parallel tip-decomposition algorithm -- REfine CoarsE-grained Independent Tasks (RECEIPT) that relaxes the peeling order restrictions by partitioning the vertices into multiple independent subsets that can be concurrently peeled. This enables RECEIPT to simultaneously achieve a high degree of parallelism and dramatic reduction in synchronizations.
Further, RECEIPT employs a hybrid peeling strategy along with other optimizations that drastically reduce the amount of wedge exploration and execution time.


We perform detailed experimental evaluation of RECEIPT on a shared-memory multicore server. It can process some of the largest publicly available bipartite datasets \textit{orders of magnitude faster} than the state-of-the-art algorithms -- achieving up to $1100\times$ and $64\times$ reduction in the number of thread synchronizations and traversed wedges, respectively. Using $36$ threads, RECEIPT can provide up to $17.1\times$ self-relative speedup. \kledit{Our implementation of RECEIPT is available at \url{https://github.com/kartiklakhotia/RECEIPT}.}

    
\end{abstract}
\keywords{Graph Decomposition, Bipartite Graph, Butterfly, Parallel Graph Analytics, Nucleus Decomposition}

\maketitle
 \section{Introduction}\label{sec:introduction}
Dense subgraph mining is a fundamental problem used for anomaly detection, spam filtering, social network analysis, trend summarizing and several other real-world applications~\cite{anomalyDet, spamDet, communityDet, yang2018mining, otherapp1, otherapp2}. Many of these modern day applications use bipartite graphs to effectively represent two-way relationship structures, such as consumer-product purchase history~\cite{consumerProduct}, user-group memberships in social networks~\cite{orkut}, author-paper networks~\cite{authorPaper}, etc. Consequently, mining cohesive structures in bipartite graphs has gained tremendous interest in recent years~\cite{wangButterfly, wangBitruss, zouBitruss, sariyucePeeling, shiParbutterfly}.

Many techniques have been developed to uncover hierarchical dense structures in unipartite graphs, such as truss and core decomposition~\cite{spamDet, graphChallenge, trussVLDB, sariyuce2016fast, coreVLDB, coreVLDBJ, bonchi2019distance,wen2018efficient}. Such off-the-shelf analytics can be conveniently utilized for discovering dense parts in projections of bipartite graphs as well~\cite{projection}. However, this approach results in a loss of information and a blowup in the size of the projection graphs~\cite{sariyucePeeling}. To prevent these issues, researchers have explored the role of butterflies ($2,2-$bicliques) to mine dense subgraphs directly in bipartite networks~\cite{sariyucePeeling,zouBitruss}. 
A butterfly is the most basic unit of cohesion in bipartite graphs. Recently, Sariyuce et al. conceptualized $k-$tip as a vertex-induced subgraph with at least $k$ butterflies incident on every vertex in one of the bipartite vertex sets~(fig.\ref{fig:tipDemo}).
They show that $k-$tips can unveil hierarchical dense regions in bipartite graphs more effectively than unipartite approaches applied on projection graphs. 
As a space-efficient representation for the $k-$tip hierarchy, Sariyuce et al. further define the notion of \textit{tip number} of a vertex $u$ as the largest $k$ for which a $k-$tip contains $u$. In this paper, we study the problem of finding tip numbers of vertices in a bipartite graph, also known as \textit{tip decomposition}.  

\begin{figure}[htbp]
    \centering
\includegraphics[width=\linewidth]{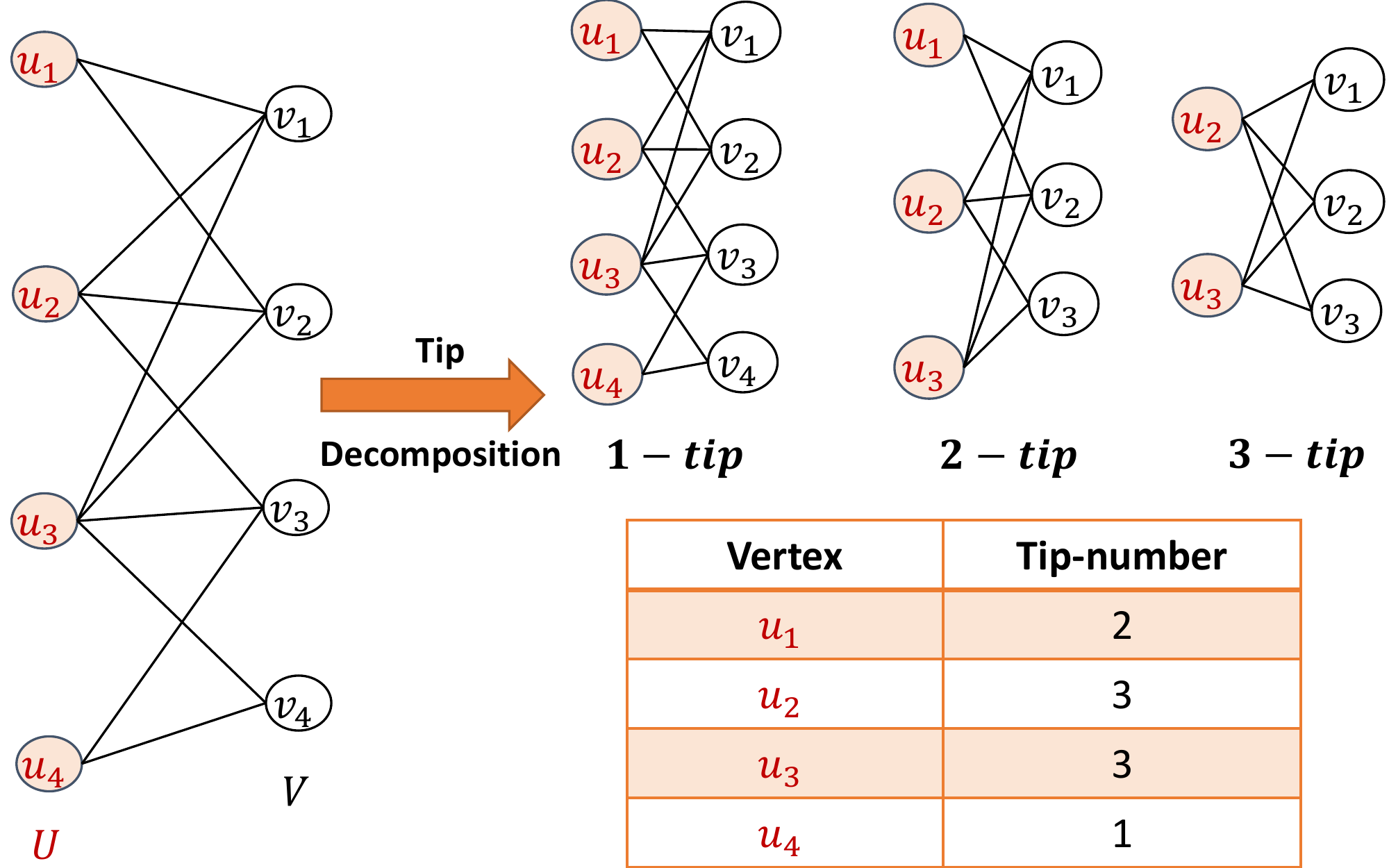}     
\caption{Tip decomposition of vertex set $U$ in a bipartite graph. $u_4$ and $u_1$ participate in $1$ and $2$ butterflies, respectively. Although $u_3$ participates in $5$ butterflies in original graph, only $3$ of them are with $u_2$ with which it creates a $3$-tip.}
    \label{fig:tipDemo}
\end{figure}

Tip decomposition can be employed in several real-world applications that utilize dense subgraphs. It can find groups of researchers (along with group hierarchies) with common affiliations from author-paper networks~\cite{sariyucePeeling}. It can be used to detect communities of spam reviewers from user-rating graphs in databases of e-commerce companies; such reviewers collaboratively rate selected products, appearing in close-knit structures~\cite{mukherjee2012spotting,fei2013exploiting} that tip decomposition can unveil. It can be used for document clustering, graph summarization and link prediction as dense $k$-tips unveil groups of nodes with connections to common and similar sets of neighbors ~\cite{dhillon2001co,leicht2006vertex,navlakha2008graph,communityDet}.

Existing sequential and tip decomposition algorithms~\cite{sariyucePeeling, shiParbutterfly} employ a bottom-up peeling approach. 
Vertices with the minimum butterfly count are peeled (deleted), the count of their 2-hop neighbors with shared butterflies is decremented, and the process is then iterated.
However, exploring $2-$hop neighborhood for every vertex requires traversing a large number of wedges, rendering tip decomposition computationally intensive. For example, the $TrU$ dataset has $140$ million edges but peeling it requires traversing $211$ trillion  wedges, which is intractable for a sequential algorithm.
For such high complexity analytics, parallel computing is often used to scale to large datasets~\cite{Park_2016,smith2017truss,10.1145/3299869.3319877}. In case of tip decomposition, the bottom-up peeling approach used by the existing parallel algorithms severely restricts their scalability. It mandates constructing levels of the $k-$tip hierarchy in non-decreasing order of tip numbers, thus imposing sequential restrictions on the order of vertex peeling. Note that the parallel threads need to synchronize at the end of every peeling iteration.
Further note that the range of tip numbers can be quite large and the number of iterations required to peel all vertices with a given tip number 
is variable. 
Taken together, the conventional approach of parallelizing the workload within each iteration requires major synchronization, rendering it ineffective. 
For example, P\textsc{ar}B\textsc{utterfly} experiences $\approx1.5$ million synchronization rounds for peeling $TrU$~\cite{shiParbutterfly}. These observations motivate the need for an algorithm that exploits the parallelism available across multiple levels of a $k-$tip hierarchy to reduce the amount of synchronizations. 

In this paper, we propose the REfine CoarsE-grained IndePendent Tasks (RECEIPT) algorithm, that adopts a novel two-step approach to drastically reduce the amount of parallel peeling iterations and in turn, the amount of synchronization.  
The key insight that drives the development of RECEIPT is that the tip-number $\theta_u$ of a vertex $u$ only depends on the butterflies shared between $u$ and other vertices with tip numbers \textit{greater than} or \textit{equal to} $\theta_u$. Thus, vertices with smaller tip numbers can be peeled in any order without affecting the correctness of $\theta_u$ computation.
To this purpose, RECEIPT divides tip decomposition into a \textit{two-step computation} where each step exposes parallelism across a different dimension. 

In the first step, it creates few \textit{non-overlapping} ranges that represent lower and upper bounds on the vertices' tip numbers. To find vertices with a lower bound $\theta$, all vertices with upper bound smaller than $\theta$ can be peeled simultaneously, providing \textit{sufficient  vertex-workload per parallel peeling iteration}. The small number of ranges ensures \textit{little synchronization} in this step.
In the second step, RECEIPT \textit{concurrently processes  vertex subsets} corresponding to different ranges to compute the exact tip numbers\footnote{A vertex subset for a given range is peeled by a single thread in the second step.}.
The absence of overlap between tip number ranges enables each of these subsets to be peeled \textit{independently} of other vertices in the \looseness=-1graph.

RECEIPT's two-step approach further enables development of novel optimizations that radically decrease the amount of wedge exploration.
Equipped with these optimizations, we combine both \textit{computational efficiency} and \textit{parallel performance} for fast decomposition of large bipartite graphs. Overall, our contributions can be summarized as follows:

\begin{enumerate}[leftmargin=*]
    \item We propose a novel RefinE CoarsE-grained Independent Tasks (RECEIPT) algorithm for tip decomposition in bipartite graphs. RECEIPT is the \textit{first} algorithm that parallelizes workload across the levels of subgraph hierarchy created by tip decomposition. 
    \item We show that RECEIPT is theoretically work efficient and dramatically reduces thread synchronization. As an example, it incurs only $1335$ synchronization rounds while processing $TrU$, which is $1105\times$ less than the existing parallel algorithms. 
    \item We develop novel optimizations enabled by the two-step approach of RECEIPT. These optimizations drastically reduce the amount of wedge exploration and improve computational efficiency of RECEIPT. For instance, we traverse  $3297$B wedges to tip decompose $TrU$, which is $64\times$ less than the state-of-the-art.
\end{enumerate}

We conduct detailed experiments using some of the largest public real-world bipartite graphs. RECEIPT extends the limits of current practice by feasibly computing tip decomposition for these datasets. For example, it can process the $TrU$ graph in $46$ minutes whereas the state-of-the-art does not finish in $10$ days. Using $36$ threads, we achieve up to $17.1\times$ parallel speedup.

\section{Background}\label{sec:background}
In this section, we will discuss state-of-the-art algorithms for counting per-vertex butterflies (sec.\ref{sec:counting}). Counting is used to initialize the \textit{support} (running count of incident butterflies) of each vertex during tip-decomposition, and also constitutes a crucial optimization in RECEIPT. Hence, it is imperative to analyze counting algorithms.

We will also discuss the bottom-up peeling approach for tip-decomposition used by the 
existing algorithms (sec.\ref{sec:bottomup}). Table~\ref{table:notations} lists the 
notations used in this paper. Note that we decompose either $U$ or $V$ vertex set at a time.
For clarity of description, we assume that $U$ is the  primary vertex set to process \kledit{and we use the word "wedge" to imply a wedge with endpoints in $U$}.
However, for empirical analysis, we will individually tip decompose both vertex sets in each graph dataset.

\begin{table}[htbp]
\caption{Frequently used notations}
\label{table:notations}
\resizebox{\linewidth}{!}{%
\begin{tabular}{|c|c|}
\hline
$G(W=(U, V), E)$                                 & \begin{tabular}[c]{@{}c@{}}bipartite graph $G$ with vertices $W$ and edges $E$\\ $U$ and $V$ are two disjoint vertex sets in $G$\end{tabular} \\ \hline
$n, m$                                           & no. of  vertices and edges in G; $n=|W|, m = |E|$                                                                                               \\ \hline
$\alpha$ & arboricity of $G$ \cite{chibaArboricity} \\ \hline
$N_u$                                            & neighboring vertices of $u$                                                                                                                   \\ \hline
$d_u$                                            & degree of vertex $u$; $d_u=\abs{N_u}$                                                                                                            \\ \hline
$\bowtie_u$ / $\bowtie_U$                        & support (\# butterflies) of $u$ / vertices in set $U$                                                                                       \\ \hline
$\bowtie_{u_1, u_2}=\ \bowtie_{u_2, u_1}$ & \# butterflies shared between $u_1$ and $u_2$ \\ \hline
$\wedge_U$ & \# wedges with endpoints in set $U$ \\ \hline
$\theta_u$                                       & tip number of vertex $u$                                                                                                                      \\ \hline
$\theta^{max}$ & maximum tip number for a vertex \\ \hline
$P$                                                & number of vertex subsets created by RECEIPT                                                                                    \\ \hline
$T$                                                & number of threads                                                                                                                             \\ \hline
\end{tabular}}
\end{table}

\subsection{Per-vertex butterfly counting}\label{sec:counting}
A butterfly (2,2-bicliques/quadrangle) is a combination of two wedges with common endpoints. 
A simple way to count butterflies is to explore all wedges and combine the ones with common end points. However, counting per-vertex butterflies using this procedure is extremely expensive with $\mathcal{O}\left(\sum_{u\in U}\sum_{v\in N_u}d_v\right)$ complexity (if we use vertices in $U$ as end points). 

Chiba and Nishizeki \cite{chibaArboricity} proposed an efficient vertex-priority quadrangle counting algorithm which \kledit{traverses $\mathcal{O}\left(\sum_{(u,v)\in E}\min{(d_u, d_v)}\right) = \mathcal{O}\left(\alpha \cdot m\right)$ wedges with $\mathcal{O}(1)$ work per wedge}. Wang et al.\cite{wangButterfly} further propose a cache efficient version of the vertex-priority algorithm that reorders the vertices in decreasing order of degree. \kledit{Their algorithm only traverses wedges where degree of the last vertex is greater than the degree of the start and middle vertices}. 
It can be easily parallelized by concurrently processing multiple start vertices~\cite{shiParbutterfly, wangButterfly}, as shown in alg.\ref{alg:counting}. 
In RECEIPT, we adopt this parallel variant for per-vertex counting by adding the contributions from traversed wedges to start, mid and end-points. Each thread is provided a $\mathcal{\theta}(|W|)$ array for wedge aggregation (line 5, alg.\ref{alg:counting}). This is similar to the best performing batch aggregation mode in the P\textsc{ar}B\textsc{utterfly} framework~\cite{shiParbutterfly}.

\begin{algorithm}[htbp]
	\caption{Counting per-vertex butterflies (\texttt{pvBcnt})}
	\label{alg:counting}
	\begin{algorithmic}[1]
	    \Statex{\textbf{Input:} Bipartite Graph $G(W=(U, V), E)$} 
	    \Statex{\textbf{Output:} Butterfly counts $\bowtie_u \forall u\in W$}
        \State{Relabel vertices in $G$ in descending order of degree}
        \ParForEach{$u\in U\cup V$}
        \State{Sort $N_u$ in ascending order of new labels}
        \State{$\bowtie_w\leftarrow 0$}
        \EndParForEach
        \ParForEach{$sp\in U\cup V$} 
            \State{Initialize $wdg\_arr$ array to all zeros}
            \State{$nze \leftarrow \{\phi\}$, $nzw \leftarrow \{\phi\}$}
            \ForEach{$mp\in N_{sp}$}
                \ForEach{$ep\in N_{mp}$}
                    \If{$(ep\geq mp)$ or $(ep\geq sp)$}{ break}\EndIf
                    \State{$wdg\_arr[ep] \leftarrow wdg\_arr[ep] + 1$}
                    \State{$nze \leftarrow nze \cup \{ep\}$, $\ nzw \leftarrow nzw \cup \{(mp, ep)\}$}
                \EndForEach
            \EndForEach
            \ForEach{$u\in nze$} \Comment{\textit{same side contribution}}
                \State{$bcnt \leftarrow {wdg\_arr[u] \choose 2}$}
                \State{Atomically add $bcnt$ to $\bowtie_u$ and $\bowtie_{sp}$}
            \EndForEach
            \ForEach{$(u,v)\in nzw$} \Comment{\textit{opp. side contribution}}
                \State{$bcnt \leftarrow wdg\_arr[v] -1$}
                \State{Atomically add $bcnt$ to $\bowtie_u$}
            \EndForEach
        \EndParForEach
	\end{algorithmic}
\end{algorithm}

\subsection{Tip Decomposition}\label{sec:bottomup}
Sariyuce et al.\cite{sariyucePeeling} introduced $k-$tips to identify vertex-induced subgraphs with large number of butterflies. They formally define it as follows:
\begin{definition}\label{def:ktip}
A bipartite subgraph $H = (U',V, E) \subseteq G$, induced on $U$, is a \textbf{k-tip} iff
\begin{itemize}[leftmargin=*]
    \itemsep0em
    \item each vertex $u \in U'$ participates in at least k butterflies,
    \item each vertex-pair $(u,u')\in U'$ is connected by a series of butterflies,
    \item H is maximal i.e. no other k-tip subsumes H.
\end{itemize}
\end{definition}

$k-$tips are hierarchical -- a $k-$tip overlaps with $k'-$tips for all $k'<=k$. Therefore, storing all $k-$tips is inefficient and often, infeasible. \textbf{Tip number} $\theta_u$ is defined as the maximum $k$ for which $u$ is present in a $k-$tip. Tip numbers provide a space-efficient representation of $k-$tip hierarchy with quick retrieval. In this paper, we study the problem of finding tip numbers, known as \textbf{tip decomposition}.

Algorithms in current practice use Bottom-Up Peeling (\texttt{BUP}) for tip-decomposition, as shown in alg.\ref{alg:bottomup}. It initializes vertex support using per-vertex butterfly counting, and then iteratively peels the vertices with minimum support until no vertex remains. When a vertex $u\in U$ is peeled, its support at that instant is recorded as its tip number $\theta_u$. Further, for every vertex $u'$ with $\bowtie_{u,u'}>0$ shared butterflies, the support of $u'$ is decreased by $\bowtie_{u,u'}$
(capped at $\theta_u$). Thus, tip numbers are assigned in a non-decreasing order. The complexity of bottom-up peeling (alg.\ref{alg:bottomup}), dominated by wedge traversal (lines 8-10), is $\mathcal{O}\left(\sum_{u\in U}\sum_{v\in N_u}d_v\right)$.

\begin{algorithm}[]
	\caption{Tip decomposition using bottom-up peeling (\texttt{BUP})}
	\label{alg:bottomup}
	\begin{algorithmic}[1]
	    \Statex{\textbf{Input:} Bipartite graph $G(W=(U, V), E)$} 
	    \Statex{\textbf{Output:} Tip numbers $\theta_u\ \forall\ u\in U$}
        \State{$\{\bowtie_U,\bowtie_V\}\leftarrow$ \texttt{pvBcnt($G$)}}\Comment{\textit{Initial count}}
        \While{$U\neq\ \{\phi \}$}\Comment{\textit{Peel}}
            \State{let $u\in U$ be the vertex with minimum support $\bowtie_u$}
            \State{$\theta_u \leftarrow\  \bowtie_u,\ \ U\leftarrow U\setminus \{u\}$}
            \State{\texttt{\textsc{update}(}$u, \theta_u, \bowtie_U, G$\texttt{)}}    
        \EndWhile
        \Function{\texttt{update}}{$u, \theta_u, \bowtie_U, G$}\label{func:update}
            \State{Initialize hashmap $wdg\_arr$ to all zeros}
            \ForEach{$v\in N_u$}
                \ForEach{$u' \in N_v \setminus \{u\}$}
                    \State{$wdg\_arr[u'] \leftarrow wdg\_arr[u'] + 1$}
                \EndForEach
            \EndForEach 
            \ForEach{$u'\in wdg\_arr$}\Comment{\textit{Update Support}}
                \State{$\bowtie_{u,u'} \leftarrow {wdg\_arr[u']\choose 2}$}\Comment{shared butterflies}
                \State{$\bowtie_{u'} \leftarrow \max\{\theta_u,\ \bowtie_{u'}-\bowtie_{u,u'}\}$}
            \EndForEach
        \EndFunction
	\end{algorithmic}
\end{algorithm}

\subsubsection{Challenges}\label{sec:challenges}
Tip decomposition is computationally very expensive and parallel computing is widely used to accelerate such workloads. However, the state-of-the-art parallel tip decomposition framework P\textsc{ar}B\textsc{utterfly}~\cite{shiParbutterfly, julienne} only utilizes parallelism within each peeling iteration by concurrently peeling all vertices with minimum support value.
This restrictive approach is adopted due to the following sequential 
dependency between iterations -- \textit{support updates computed in an 
iteration guide the choice of vertices to peel in the subsequent iterations.} 
\kledit{As shown in \cite{shiParbutterfly}, P\textsc{ar}B\textsc{utterfly} is work-efficient with a complexity of 
$\mathcal{O}\left(\sum_{u\in U}\sum_{v\in N_u}{d_v} + \rho_v\log^2{m}\right)$,\\ where $\rho_v$ is the number of peeling iterations. However, its 
scalability is limited in practice due to the following drawbacks:}

\begin{enumerate}[leftmargin=*]
    \item Alg.~\ref{alg:bottomup} incurs large number of iterations and low workload per individual iteration. The resulting synchronization and load imbalance render simple parallelization ineffective for \looseness=-1acceleration.\\
    \textbf{Objective 1} -- Design an efficient parallel algorithm for tip decomposition that reduces thread synchronizations.
    \item Alg.~\ref{alg:bottomup} explores all wedges with end-points in $U$. This is computationally expensive and can make it infeasible even for a scalable parallel algorithm, to execute in reasonable time.\\
    \textbf{Objective 2} -- Reduce the amount of wedge traversal. 
\end{enumerate}
\section{REfine CoarsE-grained IndePendent Tasks Algorithm}\label{sec:receipt}

\begin{figure*}[htbp]
    \centering
\includegraphics[width=0.85\linewidth]{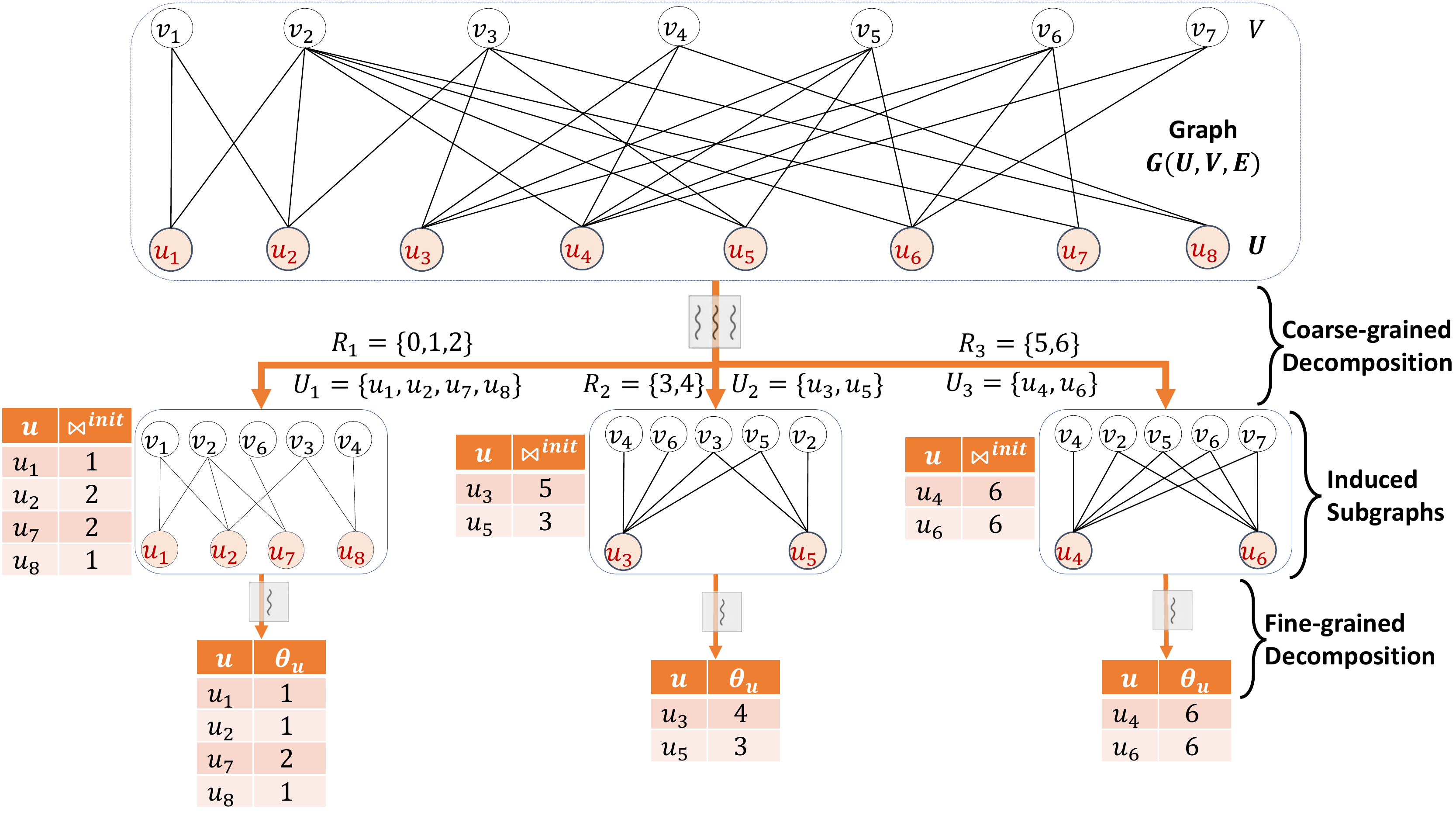}     
\caption{Graphical illustration of tip decomposition using RECEIPT. Coarse-grained Decomposition partitions $U$ into three vertex-subsets $U_1, U_2$ and $U_3$ whose tip numbers belonging to ranges $R_1, R_2$ and $R_3$, respectively. It processes each peeling iteration in parallel. Fine-grained decomposition creates subgraphs induced on $U_1, U_2$ and $U_3$, initializes vertex support using $\bowtie^{init}$ and peels each of them sequentially. It processes multiple subgraphs concurrently. Note that $G$ has $38$ wedges  whereas the induced subgraphs collectively have only $11$ wedges.}
    \label{fig:example}
\end{figure*}

In this section, we present 
a novel shared-memory parallel algorithm for tip decomposition -- REfine CoarsE-grained IndePendent Tasks (RECEIPT), that drastically reduces the number of parallel peeling iterations (objective 1, sec.\ref{sec:challenges}). The fundamental insight  underlying RECEIPT is that $\theta_u$ only depends on the number of butterflies that $u$ shares with vertices having tip numbers \textit{no less than} $\theta_u$. Therefore, if we initialize the support $\bowtie_u$ to total butterflies of $u$ in $G$, only the following are required to compute $\theta_u$:
\begin{itemize}
    \item The \textit{aggregate} effect  of peeling all vertices $v$ with $\theta_v < \theta_u$ on $\bowtie_u$: $\bowtie_u$ would be $\geq \theta_u$ after such deletion. 
    \item The effect of deleting vertices $v$ with $\theta_v = \theta_u$ on $\bowtie_u$: $\bowtie_u$ would be $\leq \theta_u$ after such deletion.
\end{itemize}

This insight allows us to eliminate the major 
bottleneck for parallel tip decomposition i.e. the 
constraint of deleting only minimum support 
vertices in any peeling iteration. In order
to find vertices with tip number greater than or equal to $\theta$, all vertices
with tip numbers less than $\theta$ can be peeled simultaneously, providing sufficient parallelism. However, for 
every $\theta\in \{0,1\dots\theta^{max}\}$, 
peeling all the vertices $v$ with 
$\theta_v<\theta$ and computing corresponding support updates will make the algorithm 
extremely inefficient.
To avoid this inefficiency, 
RECEIPT follows a 2-step approach as shown in 
fig.\ref{fig:example}.

In the first step, it divides the 
spectrum\footnote{RECEIPT does not assume prior
knowledge of maximum tip number value and 
computes an upper bound during the execution.} 
of tip numbers $[0, \theta^{max}]$ into $P$ 
smaller ranges ${R_1, R_2\dots R_P}$, \kledit{where $P$ is a 
user-defined parameter}. A range 
$R_i$ is defined by the set of integers in 
$[\theta(i), \theta(i+1))$ with boundary conditions 
$\theta(1) = 0$ and $\theta(P+1) > \theta^{max}$. 
Note that the ranges have no overlap i.e. for 
any pair $(i,j)$, $i\neq j$ implies $R_i \cap R_j = \{\phi\}$.
Corresponding to each range $R_i$, RECEIPT CD also
finds the subset of vertices $U_i$ whose 
tip-numbers belong to that range i.e. $U_i = 
\cup_{u\in U}\{u |\ \theta_u \in R_i\}$. In other words, instead of
finding the exact tip number $\theta_u$ of a 
vertex $u$, the first step in RECEIPT computes 
\textit{bounds} on $\theta_u$, by finding its 
range affiliation using peeling. Therefore, this step is 
named \textbf{Coarse-grained Decomposition (RECEIPT CD)}. 
The absence of overlap between the ranges allows
each subset to be peeled independently of others for
exact tip number computation in a later step.

RECEIPT CD has a stark difference from conventional bottom-up approach: instead of peeling vertices with minimum support, every iteration concurrently peels \textit{all vertices} with support value in a \textit{broad range}. \kledit{Setting $P\ll \theta^{max}$ ensures a large amount of vertices peeled per iteration (\textit{sufficient parallel workload}) and significantly less number of iterations (\textit{dramatically less synchronization}) compared to parallel variants of bottom-up peeling (\looseness=-1alg.\ref{alg:bottomup})}.

The next step finds the exact tip numbers of vertices and is termed \textbf{Fine-grained Decomposition (RECEIPT FD)}. \kledit{The key idea behind RECEIPT FD is as follows -- for each vertex $u$, if we know the number of butterflies it shares with vertices in its own subset $U_{i}$ and in subsets with higher tip number ranges ($\cup_{j>i}\{U_{j}\}$), then every subset can be peeled independently of others. RECEIPT FD exploits this independence to concurrently process multiple vertex subsets by simultaneously employing sequential bottom up peeling on the subgraphs induced by these subsets. 
Setting $P\gg T$ ensures that RECEIPT FD can be efficiently parallelized. Thus, RECEIPT avoids strict sequential dependencies across peeling iterations and systematically parallelizes tip decomposition.}

\kledit{Since each butterfly has two vertices in $U$, butterflies with vertices in different subsets are not preserved in the induced subgraphs. Hence, butterfly counting on a subgraph induced on $U_i$ will not account for butterflies shared between $u\in U_i$ and vertices in subsets with higher tip number ranges. However, we note that when $U_{i-1}$ is completely peeled in RECEIPT CD, the support of a vertex $u\in U$ reflects the number of butterflies it shares with remaining vertices i.e. $\cup_{j\geq i}\{U_{j}\}$. This is precisely the initial butterfly count for $u$ as required in RECEIPT FD. Hence, we store these values during RECEIPT CD (in $\bowtie^{init}$ vector as shown in fig.\ref{fig:example}) and use them for support initialization in RECEIPT FD.}

The two-step approach of RECEIPT can potentially double the workload as each wedge may be traversed once in both steps.
However, we note that since each subset is processed independently of others, support updates are \textit{not communicated between the subsets}. Hence, when peeling a subset $U_i$, we only need to traverse
the wedges with both endpoints in $U_i$. Therefore, RECEIPT FD
first creates an \textit{induced subgraph} on $U_i$ and only explores the wedges in that subgraph. This dramatically reduces the amount of work done in RECEIPT FD. For example, in fig.\ref{fig:example}, the original graph $G$ has $38$ wedges with endpoints in $U$, whereas the three subgraphs collectively have only $11$ such wedges which will be traversed during RECEIPT FD. 


\subsection{Coarse-grained Decomposition}\label{sec:CD}
Alg.\ref{alg:cd} depicts the pseudocode for Coarse-grained Decomposition (RECEIPT CD). 
It takes a bipartite graph $G(U,V,E)$ as input and partitions the vertex set $U$ into $P$ subsets with different tip number ranges.
Prior to creating vertex subsets, RECEIPT CD uses per-vertex counting (alg.\ref{alg:counting}) to initialize the support of vertices in $U$.

The first step in computing a subset $U_i$ is to find the tip number range $R_i=[\theta(i), \theta(i+1))$. 
\kledit{Ideally, the ranges should be computed such that the number of wedges in the induced subgraphs are uniform for all $P$ subsets (to ensure load balance 
during RECEIPT FD). However, induced subgraphs are not known prior to
actual partitioning. Secondly, exact tip numbers are not known either and hence, vertices in $U_i$ cannot be determined prior to partitioning, for different values of $\theta(i+1)$. 
Considering these challenges, RECEIPT CD uses two proxies for range determination (lines 16-21): 
\begin{enumerate*}
    \item the number of wedges in the original graph as a proxy for the wedges in induced subgraphs, and
    \item current support of vertices as a proxy for tip numbers.
\end{enumerate*}
Now, to balance the wedge counts across subsets, RECEIPT CD 
aggregates the \textit{wedge count (in $G$)} of vertices in bins 
corresponding to their \textit{current support} and computes a prefix
sum over the bins (Ref: proof of theorem~\ref{theorem:complexity}). For any unique support value $\theta$,
the prefix output now represents the total wedge count of vertices with support $\leq \theta$ (line 19).
The upper bound $\theta(i+1)$ is then chosen such that the prefix 
output for $\theta(i+1)$ is 
close to (but no less than) the average wedges 
per subset (line 20).
} 

After finding the range, RECEIPT CD
iteratively peels the vertices and adds them to subset $U_i$ (lines 9-14).
\kledit{In each iteration, it peels $activeSet$ -- the set of all vertices  with 
support in the entire range $R_i$.}
\kledit{This is unlike bottom-up peeling where vertices with a single (minimum) support value are peeled in an iteration.}
Thus, RECEIPT CD
enjoys higher workload per iteration which enables efficient parallelization. 
For the first peeling iteration of every subset $U_i$, RECEIPT CD scans all vertices in $U$ to initialize $activeSet$ (line 9).
In subsequent iterations, $activeSet$ is constructed by tracking only those vertices whose support has been updated. \kledit{For correctness of parallel algorithm, the \texttt{\textsc{update}} routine used in RECEIPT CD (line 13) uses atomic operations to decrease vertex \looseness=-1supports.}

Apart from tip number ranges and partitions of $U$, 
RECEIPT CD 
also outputs an array $\bowtie^{init}_U$, which is used to 
initialize support in RECEIPT FD (sec.\ref{sec:receipt}). Before peeling for a
subset begins, it copies the current support of 
remaining vertices into $\bowtie^{init}_U$ (line 7). 
Thus, for a vertex $u\in U_i$, $\bowtie^{init}_u$ 
indicates the support of $u$ 
\begin{enumerate*}[label=(\alph*)]
    \item after \textit{all} vertices in $U_{i-1}$ are peeled, and
    \item before \textit{any} vertex in $U_{i}$ is \looseness=-1peeled.
\end{enumerate*}

\begin{algorithm}[htbp]
	\caption{Coarse-grained Decomposition (RECEIPT CD)}
	\label{alg:cd}
	\begin{algorithmic}[1]
	    \Statex{\textbf{Input:} Bipartite graph $G(U, V, E)$, \# partitions $P$} 
	    \Statex{\textbf{Output:} Ranges $\{\theta(1), \theta(2)\dots \theta(P+1)\}$, Vertex Subsets $\{U_1, U_2\dots U_P\}$, Support initialization vector $\bowtie^{init}_U$}
        \Statex{$w[u] \leftarrow$ number of wedges in $G$ with endpoint $u$}
        \State{$U_i \leftarrow \{\phi\}\ \forall\ i\in \{1,2\dots P\}$}
        \State{$\{\bowtie_U,\bowtie_V\}\leftarrow$ \texttt{pvBcnt($G$)}}
        \State{$\theta(1)\leftarrow0$,  $\ i\leftarrow 1$}
        \State{$tgt\leftarrow\frac{\sum_{u\in U}w[u]}{P}$}\Comment{\textit{average wedges per subset}}
        \While{$U\neq\ \{\phi \}$ \textbf{and} $i\leq P$}
            \ParForEach{$u\in U$}\Comment{\textit{Support Init}}
                \State{$\bowtie^{init}_u \leftarrow\ \bowtie_u$}
            \EndParForEach
            \State{$\theta(i+1)\leftarrow\text{\texttt{\textsc{findHi}(}}U, \bowtie_U, w, tgt\text{\texttt{)}}$}\Comment{\textit{Upper Bound}}
            \State{$activeSet\leftarrow$vertices with support in $\big[\theta(i),\theta(i+1)\big)$}
            
            
            \While{$activeSet\neq \{\phi \}$}\Comment{\textit{Peel Range}}
                \State{$U_i \leftarrow U_i\cup activeSet$, $\ U \leftarrow U\setminus activeSet$}
                \ParForEach{$u\in activeSet$}
                    \State{\texttt{\textsc{update(}$u, \theta(i), \bowtie_U, G$\texttt{)}}} \Comment{Ref: alg.\ref{alg:bottomup}}
                \EndParForEach                \State{$activeSet\leftarrow$vertices with support in $\big[\theta(i),\theta(i+1)\big)$}
            \EndWhile
            \State{$i\leftarrow i+1$}
        \EndWhile
        
        \Function{\texttt{findHi}}{$U, \bowtie_U, w, tgt $}\label{func:findhi}
            \State{Initialize hashmap $work$ to all zeros}
            \ForEach{$\bowtie\ \in\  \bowtie_U$} 
                \State{$work[\bowtie] \leftarrow \sum_{u\in U}\left(w[u]\cdot \mathbbm{1}(\bowtie_u \leq\ \bowtie)\right)$}
            \EndForEach
            \State{$\theta\leftarrow$ $\arg\min\left(\bowtie\right)$ such that $work[\bowtie] \geq tgt$}
            \State{return $\theta+1$}
        \EndFunction
	\end{algorithmic}
\end{algorithm}

\subsubsection{Adaptive Range Determination}\label{sec:adaptive}
\kledit{Since range determination uses current support of vertices as a proxy for tip numbers, $tgt$-- the target wedge count for a subset $U_i$, is 
covered by the vertices added to $U_i$ in the very first
peeling iteration. Hence, ${\sum_{u \in U_i} w[u] \geq tgt}$, where $w[u]$ is the wedge count of $u$ in $G$. Note that as the support of other vertices 
decreases after this iteration, more vertices may get 
added to $U_i$ and total wedge count of vertices in the final subset $U_i$ can be significantly higher than $tgt$. This could result in some 
subsets having very high workload.  Moreover, it is possible that $U$ gets completely deleted in $\ll P$ 
subsets, thus restricting the parallelism available during RECEIPT FD}. To avoid this scenario, we implement 
two-way adaptive range determination:
\begin{enumerate}[leftmargin=*]
    \itemsep0em
    \item \kledit{Instead of statically computing an averge target $tgt$, we 
    dynamically update $tgt$ for every subset based on the wedge count of remaining vertices in $U$ and the remaining number of subsets to create.} {\it If some subsets cover a large number of wedges, the 
    target for future subsets is automatically reduced, thereby preventing a situation where all 
    vertices get peeled in much less than $P$ subsets.}
    \item \kledit{A subset $U_i$ can cover significantly larger number of wedges than the target $tgt$. RECEIPT CD assumes predictive local behavior i.e. subset $U_{i+1}$ will exhibit similar behavior to $U_i$. Therefore, to balance the wedge counts of subsets, RECEIPT dynamically scales the target wedge count for $U_{i+1}$ with a scaling factor $s_i=\frac{tgt}{\sum_{u\in U_i}w[u]}\leq 1$. Note that $s_i$ quantifies the overshooting of target wedges in $U_i$.}
\end{enumerate}

After $P$ partitions, if some vertices still remain in $U$, RECEIPT CD puts all of them in a single subset $U_{P+1}$ and increments $P$.

\subsection{Fine-grained Decomposition}\label{sec:fd}
 Alg.\ref{alg:fd} presents the pseudocode for Fine-grained Decomposition (RECEIPT FD),
 which takes as input the vertex subsets and the tip number ranges created by RECEIPT CD, 
 and computes the exact tip numbers. 
It creates a task queue of subset IDs from which threads are exclusively allocated vertex subsets to peel (line 4). Before peeling $U_i$, a thread initializes the support of vertices in $U_i$ from the $\bowtie^{init}_U$ vector and induces a subgraph $G_i$ on $W_i=(U_i, V)$ (lines 5-6). Thereafter, sequential bottom-up peeling is applied on $G_i$ for tip decomposition of $U_i$ (lines 7-10).

\begin{algorithm}[htbp]
	\caption{Fine-grained Decomposition (RECEIPT FD)}
	\label{alg:fd}
	\begin{algorithmic}[1]
	    \Statex{\textbf{Input:} Bipartite graph $G(U, V, E)$, \# partitions $P$, Vertex Subsets $\{U_1, U_2\dots U_P\}$, Support initialization vector $\bowtie^{init}_U$, \# threads $T$}
	    \Statex{\textbf{Output:} Tip number $\theta_u$ for each $u\in U$} 
        \State{Insert the integers $i\in \{1,2\dots P\}$ in a queue $Q$
        }
        \ParFor{$thread\_id = 1,2\dots T$}
            \While{$Q$ is not empty}
                \State{Atomically pop $i$ from $Q$}
                \State{$G_i \leftarrow$ subgraph induced by $W_i = (U_i, V)$}
                \State{$\bowtie_{U_i}\  \leftarrow\  \bowtie^{init}_{U_i}$}       \Comment{\textit{Initialize Support}}
                \While{$U_i \neq \{\phi\}$}\Comment{\textit{Peel}}
                    \State{$u \leftarrow \arg\min_{u\in U_i}\{\bowtie_u \}$}    \State{$\theta_u\leftarrow\ \bowtie_u$, $\ U_i\leftarrow U_i\setminus \{u\}$}
                    \State{\texttt{\textsc{update(}}$u,\theta_u,\bowtie_{U_i}, G_i$\texttt{)}}
                \EndWhile
            \EndWhile
        \EndParFor
	\end{algorithmic}
\end{algorithm}
\subsubsection{Parallelization Strategy}\label{sec:schedule}
While adaptive range determination(sec.\ref{sec:adaptive}) tries to create subsets with
uniform wedges in $G$, the actual work per subset in FD depends on the wedges
in induced subgraphs $G_i(U_i, V, E_i)$ that can be non-uniform. \kledit{Therefore, to \textit{improve load balance} across threads, we use parallelization strategies inspired from Longest Processing Time scheduling rule which is a known $\frac{4}{3}$-approximation algorithm \cite{graham1969bounds}. However, exact processing time for peeling an induced subgraph $G_i$ is unknown. Instead, we use the number of wedges with endpoints in $U_i$ as a proxy along with runtime task scheduling as given below:}
:
\begin{itemize}[leftmargin=*]
    \item \textit{Dynamic task allocation$\rightarrow$} Threads atomically pop unique subset IDs from the  task queue when they become idle during runtime (line 4). Thus, all threads are busy until every subset is scheduled.
    \item \textit{Workload-aware Scheduling$\rightarrow$} We sort the subset IDs in task queue in decreasing order of their wedge counts. Thus, the subsets with highest workload 
    (wedges) get scheduled first and the threads processing them naturally receive fewer
    tasks in the future.
    Fig.\ref{fig:schedule} shows how workload-aware scheduling can tremendously improve the efficiency of dynamic allocation.
\end{itemize}
 
\begin{figure}[htbp]
    \centering
\includegraphics[width=0.9\linewidth]{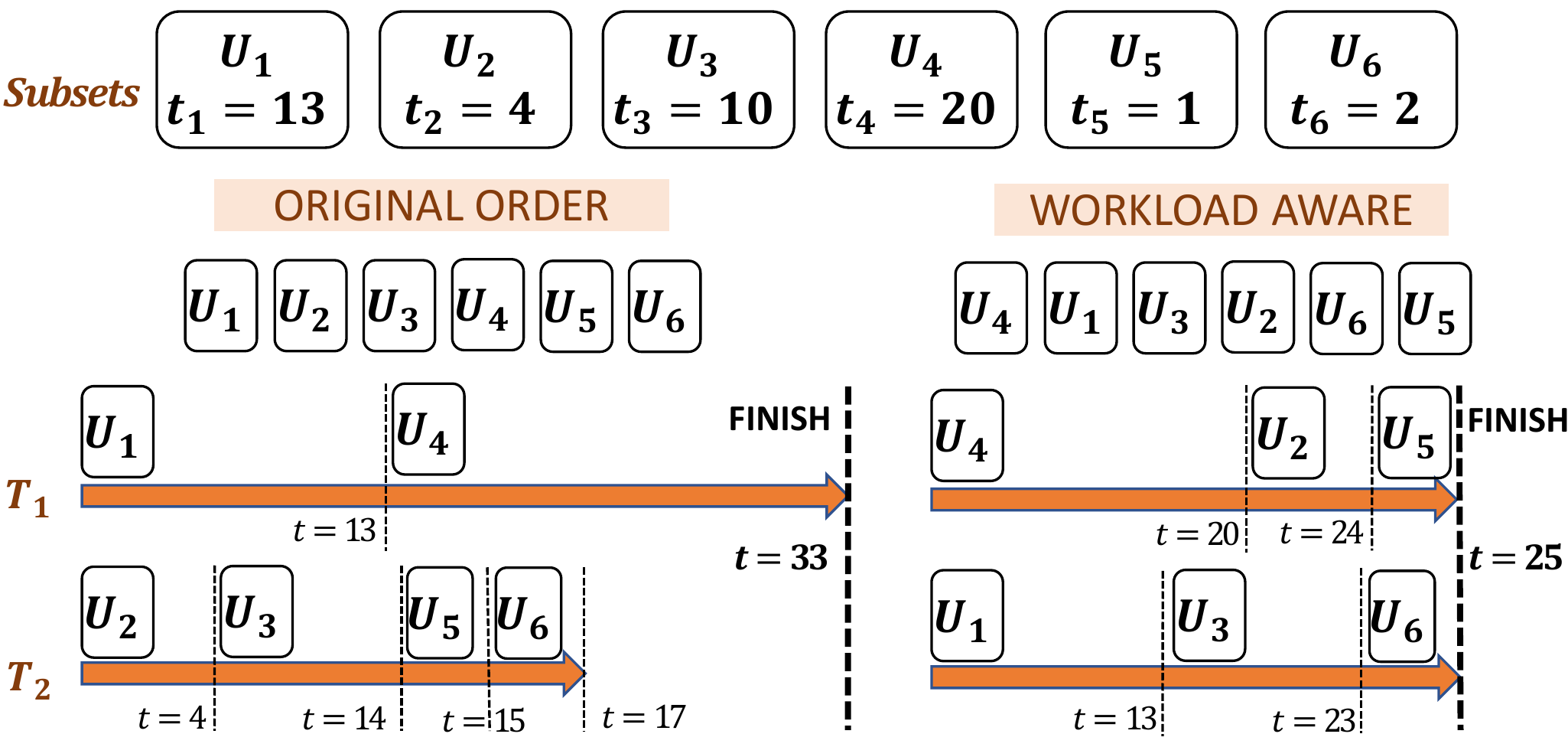}     
\caption{Benefits of Workload-aware Scheduling (WaS) in a $2$-thread ($T_1$ and $T_2$) system. Top row shows vertex subsets with time required to peel them. Dynamic allocation without WaS finishes in $33$ units of time compared to $25$ units with WaS.}
    \label{fig:schedule}
\end{figure}


\subsection{Analysis}\label{sec:analysis}
In this section, we will prove the correctness of tip numbers computed by RECEIPT. We will also analyze its computational complexity and show that RECEIPT is work-efficient. We will exploit comparisons with sequential \texttt{BUP} (alg.\ref{alg:bottomup}) and hence, first establish the follwing lemma:

\begin{lemma}\label{lemma:independence}
In \texttt{BUP}, the support $\bowtie_u$ of a vertex $u$ at any time $t$ before the first vertex with tip number $\theta_u$ is peeled, depends on the cumulative effect of all vertices peeled till $t$ and is independent of the order in which they are peeled.
\end{lemma}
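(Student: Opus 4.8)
The plan is to reduce the update rule of \texttt{BUP} to a pure subtraction over the time window in question, which makes order-independence immediate. Recall from alg.~\ref{alg:bottomup} that peeling a vertex $v$ updates each affected $u$ by $\bowtie_u \leftarrow \max\{\theta_v,\ \bowtie_u - \bowtie_{v,u}\}$. If the $\max$ never selected the cap $\theta_v$, this would collapse to $\bowtie_u \leftarrow \bowtie_u - \bowtie_{v,u}$, and $\bowtie_u$ at time $t$ would equal its initial count minus $\sum_{v}\bowtie_{v,u}$ over the peeled vertices, a quantity depending only on the set of peeled vertices and not their order. So the entire argument hinges on showing the cap is inactive for $u$ throughout this window.

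First I would establish the monotonicity of the peeling threshold. Since the peeled vertex is always a minimum-support vertex and its recorded tip number equals that minimum support, the cap $\max\{\theta_v,\cdot\}$ guarantees that after peeling $v$ no surviving support falls below $\theta_v$; hence the minimum support is non-decreasing across iterations, consistent with the fact (already noted in the excerpt) that tip numbers are assigned in non-decreasing order. Letting $t^*$ be the step at which the first vertex of tip number $\theta_u$ is peeled, the minimum support at $t^*$ is exactly $\theta_u$, so every surviving vertex — in particular $u$, which is not peeled before $t^*$ — has $\bowtie_u \geq \theta_u$ at that moment.

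Next I would propagate this bound backward to all earlier times. During the window preceding $t^*$, every peeled vertex $v$ satisfies $\theta_v < \theta_u$, so every cap value applied to $u$ is strictly below $\theta_u$; combined with the fact that subtraction only decreases support, once $\bowtie_u$ dipped below $\theta_u$ it could never return to $\theta_u$ before $t^*$. Since that would contradict $\bowtie_u \geq \theta_u$ at $t^*$, I conclude $\bowtie_u \geq \theta_u$ holds throughout the window. Then for any single update of $u$ by a peeled $v$ here, the post-update value is $\geq \theta_u > \theta_v$; as the $\theta_v$-branch of the $\max$ could only yield $\theta_v$, the subtraction branch must win, so the cap is inactive and the update is exactly $\bowtie_u \leftarrow \bowtie_u - \bowtie_{v,u}$.

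With every relevant update reduced to plain subtraction, $\bowtie_u$ at time $t$ equals its initial butterfly count minus $\sum_{v \in S_t}\bowtie_{v,u}$, where $S_t$ is the set of vertices peeled up to $t$; this expression is symmetric in $S_t$ and therefore independent of peeling order, proving the lemma. I expect the main obstacle to be ruling out the cap, since the $\max$ is precisely the operation that could in principle introduce order-dependence; the resolution rests on coupling threshold-monotonicity with the definition of tip number so that the cap provably never binds for $u$ before any vertex of tip number $\theta_u$ is removed.
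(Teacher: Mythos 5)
Your proof is correct and takes essentially the same route as the paper's: both arguments reduce every update to a pure subtraction by showing the $\max\{\theta_v,\ \cdot\}$ cap never binds for $u$ (since tip numbers are assigned in non-decreasing order, every cap value $\theta_v$ applied before time $t$ satisfies $\theta_v < \theta_u \leq\ \bowtie_u$), and then conclude order-independence from commutativity of addition. Your backward-propagation argument establishing $\bowtie_u \geq \theta_u$ throughout the window merely makes explicit an invariant that the paper asserts tersely.
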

\begin{proof}
Let $\bowtie_u^G$ be the initial support of $u$ (butterflies of $u$ in $G$), $S$ be the set of 
vertices peeled till $t$ and $u'\in S$ be the most recently peeled vertex. Since \texttt{BUP} assigns tip 
numbers in a non-decreasing order, no vertex with tip number $\geq \theta_u$ would be peeled till $t$. 
Hence, $\theta_{u'} < \theta_u \leq \bowtie_u$. Since $\theta_{u'}$ was the minimum vertex support in the latest peeling iteration, $\bowtie_u =\ \bowtie_u^G + \sum_{v\in S}{\left(-\bowtie_{v,u}\right)}$. By commutativity of addition, this term is independent of the order in which $-\bowtie_{v,u}$ are added i.e. the order in which vertices in $S$ are peeled.
\end{proof}

Lemma~\ref{lemma:independence} highlights that whether \texttt{BUP} peels a set of vertices $S\subseteq U$ in its original order or in the same order as RECEIPT CD, the support of vertices with tip numbers higher than that of $S$ would be the same. Next we show that parallel processing does not affect the correctness of support updates. 

\begin{lemma}\label{lemma:parallel}
Given a set $S$ of vertices to be peeled in an iteration, the parallel peeling in RECEIPT CD (line 12-13, alg.\ref{alg:cd}) correctly updates the support of vertices.
\end{lemma}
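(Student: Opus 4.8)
The plan is to show that after the parallel iteration the support of every vertex equals the value that sequential \texttt{BUP} would assign after peeling the entire set $S$, which by Lemma~\ref{lemma:independence} is order-independent. Concretely, I would first fix the target: for any vertex $u'$ with support $\bowtie_{u'}$ at the start of the iteration, the ``correct'' post-iteration value is $\max\{\theta(i),\, \bowtie_{u'} - \sum_{u\in S}\bowtie_{u,u'}\}$, since each $u\in S$ peeled during the iteration contributes a decrement of $\bowtie_{u,u'}$ capped at the range lower bound $\theta(i)$ (the second argument passed to \textsc{update} on line~13). I would then split the argument into two independent concerns: (i) that each per-vertex decrement amount $\bowtie_{u,u'}$ is computed correctly despite concurrency, and (ii) that the concurrent aggregation of these decrements into the shared support array is order-independent and matches the target.

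For (i), the key observation is that $\bowtie_{u,u'}=\binom{|N_u\cap N_{u'}|}{2}$ is a purely structural quantity of the fixed graph $G$: every wedge has its endpoints in $U$ and its midpoint in $V$, and $V$ is never peeled, so each thread running \textsc{update} computes $wdg\_arr$ by reading only immutable adjacency information. Hence different threads peeling different $u\in S$ never race while forming their decrement amounts, and each unordered pair $\{u,u'\}$ contributes its butterflies exactly once. I would additionally note that because a butterfly spans exactly two $U$-vertices, the decrements applied to a common target $u'$ by distinct peeled vertices $u_1,u_2\in S$ count disjoint sets of butterflies, so no double counting occurs.

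For (ii), the only shared mutable state is $\bowtie_U$, updated by the atomic read-modify-write $f_a(x)=\max\{\theta(i),\,x-a\}$ with $a=\bowtie_{u,u'}\ge 0$. I would establish the composition identity $f_{a_1}\!\circ f_{a_2}(x)=\max\{\theta(i),\,x-a_1-a_2\}$, which holds precisely because the decrements are nonnegative (once the running value reaches the floor $\theta(i)$ it cannot leave it). Since the composed map is symmetric in $a_1,a_2$, induction over the threads shows that any interleaving produces $\max\{\theta(i),\,\bowtie_{u'}-\sum_{u\in S}\bowtie_{u,u'}\}$, i.e. the target value; order-independence of the additive part is exactly Lemma~\ref{lemma:independence}. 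The remaining case is a target $u'$ that itself lies in $S$: such $u'$ has already been committed to $U_i$ (line~11) and its $\bowtie^{init}_{u'}$ was snapshotted before the iteration (line~7), so any residual decrement to its support neither changes its range assignment nor its initialization value and is harmless.

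I expect the main obstacle to be concern (ii): reconciling the nonlinear cap with concurrent updates. A naive implementation that performs an atomic subtract and a separate clamp could expose a transient state below $\theta(i)$ and break the composition identity, so the proof must pin down that each application of $f_a$ is a single atomic read-modify-write (e.g.\ a compare-and-swap loop). Once that is fixed, the composition identity together with Lemma~\ref{lemma:independence} closes the argument, while the structural nature of $\bowtie_{u,u'}$ in concern (i) is comparatively routine.
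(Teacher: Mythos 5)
Your proof is correct and takes essentially the same route as the paper's: both arguments rest on the fact that a butterfly contains exactly two vertices of $U$ (so the decrements that distinct peeled vertices apply to a common target count disjoint sets of butterflies), on atomicity of the concurrent updates to the shared support array, and on the observation that updates to vertices whose subset is already determined (including those capped at the floor $\theta(i)$) are harmless. Your extra precision -- formulating the target value $\max\{\theta(i),\ \bowtie_{u'}-\sum_{u\in S}\bowtie_{u,u'}\}$, proving the composition identity for the clamped decrement, and insisting that each update be a single atomic read-modify-write rather than a subtract followed by a separate clamp -- is a useful sharpening of the paper's bare statement that atomics prevent conflicts, but it does not change the substance of the argument.
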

\begin{proof}
 \kledit{Let $S$ be peeled in $j^{th}$ iteration which is a part of peeling iterations for subset $U_i$. Parallel updates are correct if for any vertex $u$ not yet assigned to any subset, support of $u$ decreases by exactly $\sum_{u' \in S}{\bowtie_{u', u}}$ in the $j^{th}$ iteration.
 Since at most two vertices of a butterfly can be in $S$ (two vertices of a butterfly are in $V$), for any vertex pair $\left(u_1, u_2\right)$, either of the following is true in $j^{th}$ iteration:} 
\begin{itemize}[leftmargin=*]
    \item \textit{No vertex in the pair is peeled} -- no updates are propagated from $u_1$ to $u_2$ and vice-versa.
    \item \textit{Exactly one vertex in the pair is peeled} -- \kledit{without loss of generality}, let $u_1\in S$. Peeling $u_1$ deletes exactly $\bowtie_{u_1, u_2}$ \textit{unique} butterflies incident on $u_2$ because they share $\bowtie_{u_1, u_2}$ butterflies and no vertex in $U\setminus \{u_1, u_2\}$ (and hence in $S$) participates in these butterflies.
    The \texttt{update} routine called for $u_1$ (line 13, alg.\ref{alg:cd}) also decreases support of $u_2$ by exactly $\bowtie_{u_1, u_2}$, until $\bowtie_{u_2} > \theta(i)$. \kledit{Since atomics are used to apply support updates, concurrent updates to same vertex do not conflict (sec.\ref{sec:CD}).} Thus, \texttt{update} routine calls for all vertices in $S$ cumulatively decrease $\bowtie_{u_2}$ by exactly $\sum_{u'\in S}{\bowtie_{u', u_2}}$, as long 
    as $\bowtie_{u_2} > \theta(i)$. If $\bowtie_{u_2} \leq \theta(i) < \theta(i+1)$, vertex subset for $u_2$ is already determined and any updates to $\bowtie_{u_2}$ have no impact. 
    \item \textit{Both $u_1$ and $u_2$ are peeled} -- vertex subset for $u_1$ and $u_2$ is determined. Any update to $\bowtie_{u_1}$ or $\bowtie_{u_2}$ has no effect.
\end{itemize}
\end{proof}

Now, we will
show that RECEIPT CD correctly computes the tip number range for every vertex. Finally, we will show that RECEIPT accurately computes the exact tip numbers for all vertices. \kledit{For clarity of explanation, we use $\bowtie_u(j)$ to denote the support of a vertex $u$ after $j^{th}$ peeling iteration in RECEIPT CD.}




\begin{lemma}\label{lemma:cd1}
There cannot exist a vertex $u$ such that $u\in U_i$ and $\theta_u \geq \theta(i+1)$.
\end{lemma}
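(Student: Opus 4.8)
The plan is to argue by contradiction. Suppose RECEIPT CD places some vertex $u$ into subset $U_i$ yet its true tip number satisfies $\theta_u \geq \theta(i+1)$. By Definition~\ref{def:ktip}, $\theta_u \geq \theta(i+1)$ means $u$ lies in a $\theta(i+1)$-tip: an induced subgraph $H = (U_H, V, E_H)$ with $u \in U_H \subseteq U$ in which every vertex of $U_H$ participates in at least $\theta(i+1)$ butterflies counted \emph{within} $H$. Writing $\bowtie_w^H$ for the number of butterflies incident on $w$ inside $H$, we have $\bowtie_w^H \geq \theta(i+1)$ for all $w \in U_H$. The goal is to exhibit a vertex of $U_H$ that RECEIPT CD must peel at a support value that is simultaneously $\geq \theta(i+1)$ and $< \theta(i+1)$.

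The key step is to track the \emph{first} vertex $w \in U_H$ that RECEIPT CD peels. At the instant $w$ is peeled, no other vertex of $U_H$ has been removed, and since tip decomposition never peels vertices of $V$, every butterfly of $w$ lying inside $H$ is still intact: both of its $U$-endpoints ($w$ and some $w' \in U_H$) and both of its $V$-endpoints remain present. Hence the uncapped support of $w$, namely its initial count minus the shared butterflies destroyed by already-peeled vertices, is at least $\bowtie_w^H \geq \theta(i+1)$; Lemma~\ref{lemma:parallel} is what guarantees the parallel updates realize exactly this remaining-butterfly count. Because the \texttt{update} routine floors every support at $\theta(i)$ (line 13, alg.\ref{alg:cd}), capping can only raise a support above its uncapped value, and a short induction over the update operations shows the maintained support of $w$ stays $\geq$ its uncapped value throughout the peeling of ranges $R_1,\dots,R_i$. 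Therefore $w$'s support at the moment it is peeled is $\geq \theta(i+1)$.

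On the other hand, $w$ is peeled no later than $u$ (it is the first of $U_H$ removed, and $u \in U_H$), so $w$ is assigned to some subset $U_j$ with $j \leq i$. A vertex enters $U_j$ only while it lies in $activeSet$, i.e.\ only when its support falls in $[\theta(j), \theta(j+1))$; in particular $w$'s support at peel time is $< \theta(j+1) \leq \theta(i+1)$, contradicting the previous bound. The case $w = u$ is identical and directly contradicts $u \in U_i$, so the claim follows. The main obstacle I anticipate is the bookkeeping around the support floor: one must carefully verify that flooring at $\theta(i)$ never pushes the support below the genuine remaining-butterfly count (only above it), so that the inequality uncapped support $\leq$ maintained support is preserved across successive ranges and the chain of comparisons remains valid.
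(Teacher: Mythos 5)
Your proof is correct, and it rests on the same two pillars as the paper's argument — Lemma~\ref{lemma:parallel} to equate maintained support with the remaining shared-butterfly count (with flooring only ever inflating it), and the observation that a vertex peeled into $U_j$ has peel-time support below $\theta(j+1) \leq \theta(i+1)$ — but your witnessing structure is genuinely different. The paper inducts over peeling iterations: it takes the \emph{first iteration} that wrongly peels a set $S_w$, uses the correctness of all earlier iterations to conclude that every vertex with tip number $\geq \theta(i+1)$ is still unpeeled, and then bounds $\theta_u \leq\ \bowtie_u(j-1) < \theta(i+1)$ for a wrongly peeled $u$, landing the contradiction on $u$ itself. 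You avoid any induction over iterations: you instantiate the $\theta(i+1)$-tip $H$ certifying $\theta_u \geq \theta(i+1)$ and track the \emph{first vertex} $w \in U_H$ to be peeled, so that all of $w$'s butterflies inside $H$ are intact at its peel time by construction, and the contradiction lands on $w$, which need not equal $u$. What each approach buys: the paper's first-wrong-iteration framing works directly with the set of all high-tip-number vertices and never needs to name a particular tip subgraph, whereas your version is more self-contained — it needs no hypothesis that prior iterations were correct — and it makes explicit the capping subtlety (the $\max\{\theta(i), \cdot\}$ floor in \texttt{update} can only raise the maintained support above the true remaining count), which the paper leaves implicit inside Lemma~\ref{lemma:parallel}. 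Both are valid proofs of the lemma, of essentially equal strength.
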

\begin{proof}
Let $j$ be the first iteration in RECEIPT CD that wrongly peels a set of vertices $S_{w}$, \kledit{and assigns them to subset $U_i$ even though $\theta_u \geq \theta(i+1)\ \forall\ u\in S_w$. Let $S_{hi} \supseteq S_{w}$ be the set of all vertices with tip numbers $\geq \theta(i+1)$ and $S_{r}$ be the set of vertices peeled up to iteration $j-1$.  
Since all vertices till $j-1$ iterations have been correctly peeled, $\theta_{u} < \theta(i+1)\ \forall\ u\in S_r$.
Hence, $S_{hi} \subseteq U\setminus S_r$. 

Consider a vertex $u\in S_{w}$. Since $u$ is peeled in iteration $j$, ${\bowtie_u(j-1)} < \theta(i+1)$. From lemma \ref{lemma:parallel}, $\bowtie_u(j-1)$ correctly represents the number of butterflies shared between $u$ and $U\setminus S_r$ (vertices remaining after $j-1$ iterations). Since $S_{hi} \subseteq U\setminus S_r$, $u$ participates in at most $\bowtie_u(j-1)$ butterflies with vertices in $S_{hi}$. By definition of tip-number (sec.\ref{sec:bottomup}),  $\theta_u \leq \bowtie_u(j-1) < \theta(i+1)$, which is a contradiction. Thus, no such $u$ exists, $S_{w}=\{\phi\}$ and all vertices in $U_i$ have tip numbers less than $\theta(i+1)$.}
\end{proof}

\begin{lemma}\label{lemma:cd2}
There cannot exist a vertex $u$ such that $u\in U_i$ and $\theta_u < \theta(i)$.
\end{lemma}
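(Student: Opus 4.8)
The plan is to prove this complementary bound to Lemma~\ref{lemma:cd1} by exhibiting an explicit $\theta(i)$-tip that contains $u$, proceeding by induction on $i$. The base case $i=1$ is immediate, since $\theta(1)=0$ and every tip number is nonnegative. So I would focus on $i\geq 2$, where I may assume (applying Lemma~\ref{lemma:cd1} to the earlier subsets) that the set $S_r = U_1\cup\cdots\cup U_{i-1}$ of vertices peeled before $U_i$ consists entirely of vertices with tip number strictly below $\theta(i)$.

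The first key step is to establish the invariant that, once $U_{i-1}$ has been completely peeled, every vertex still remaining in $U$ has support at least $\theta(i)$. This follows from the structure of the peeling loop: while peeling $U_{i-1}$ the \texttt{update} routine caps supports at $\theta(i-1)$ (so no remaining support drops below $\theta(i-1)$), and the inner loop terminates only when $activeSet=\{\phi\}$, i.e.\ when no remaining vertex has support in $[\theta(i-1),\theta(i))$. Together these force every remaining support to be $\geq \theta(i)$. Moreover, because $\theta(i)>\theta(i-1)$, such a surviving vertex was never pinned at the cap at any earlier stage $k\leq i-1$ either: its support would otherwise equal some $\theta(k)$, which lies in the active range $[\theta(k),\theta(k+1))$ of that very stage and would have forced it to be peeled then. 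Hence, by Lemma~\ref{lemma:parallel}, the recorded support equals the \emph{true} number of butterflies the vertex shares with the surviving set $U\setminus S_r$, with no capping distortion.

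The second step converts this counting statement into a statement about tip numbers. I would consider the subgraph $H$ of $G$ induced on $(U\setminus S_r,\, V)$. By the invariant, \emph{every} vertex of $U\setminus S_r$ participates in at least $\theta(i)$ butterflies within $H$. Since any two vertices sharing a butterfly lie in the same butterfly-connected component, all of a vertex's butterflies in $H$ stay inside its own component; therefore the butterfly-connected component $C$ of $u$ in $H$ is itself a subgraph in which every vertex has at least $\theta(i)$ butterflies and which is connected by a series of butterflies. Consequently $C$ meets the first two conditions of Definition~\ref{def:ktip} for $k=\theta(i)$ and is contained in a maximal $\theta(i)$-tip of $G$. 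Thus $u$ lies in a $\theta(i)$-tip, and by the definition of tip number $\theta_u\geq\theta(i)$, which establishes the lemma.

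I expect the main obstacle to be the second step: rigorously passing from the local density guarantee ``every surviving vertex has $\geq\theta(i)$ butterflies in $H$'' to membership of $u$ in an actual maximal $\theta(i)$-tip. This requires invoking Definition~\ref{def:ktip} together with the fact that the union of two overlapping subgraphs that each meet the per-vertex butterfly threshold again meets that threshold, so that a well-defined maximal tip containing $C$ exists. The counting invariant of the first step is comparatively routine, but it needs the care noted above to argue that the recorded support is uncapped and therefore exactly the butterfly count against the surviving set, which is precisely where Lemma~\ref{lemma:parallel} enters.
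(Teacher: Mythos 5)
Your proof is correct and takes essentially the same route as the paper's: both arguments rest on the observation that once a range has been fully peeled, every surviving vertex's support is at least the range boundary and (being never capped, by Lemma~\ref{lemma:parallel}) equals its exact butterfly count among the survivors, so the survivors induce a subgraph in which every vertex meets the tip threshold and the bound follows from Definition~\ref{def:ktip}. The differences are only cosmetic --- the paper argues by contradiction on a minimal misplaced set at the boundary $\theta(i+1)$ after stage $i$, while you argue directly at the boundary $\theta(i)$ after stage $i-1$ and spell out the capping and butterfly-connectivity/maximality details that the paper leaves implicit.
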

\begin{proof}
Let $i$ be the smallest integer for which there exists a set $S_w\neq \{\phi \}$ such that $\theta(i)\leq \theta_u < \theta(i+1)\ \forall\ u\in S_w$, but $u\in U_p$, where $p>i$. Let $j$ be the last iteration that peels vertices in $U_i$. 
Clearly, $\bowtie_u(j) \geq \theta(i+1)\ \forall\ u\in S_w$ otherwise $u$ would be peeled in or before iteration $j$, and will not be added to $U_p$. 

From lemma \ref{lemma:parallel}, $\bowtie_u(j)$ correctly represents the butterfly count of $u$  after vertices in $U_1\cup U_2\dots \cup U_i$ are deleted. In other words, every vertex in $S_w$ participates in at least $\theta(i+1)$ butterflies with vertices in $U_{i+1}\cup U_{i+2} \dots \cup U_{P}$ and hence, is a part of $\theta(i+1)$-tip (def.\ref{def:ktip}). Therefore, by the definition of tip number, $\theta_u \geq \theta(i+1)\ \forall\ u\in S_w$ which is a contradiction. 
\end{proof}

\begin{theorem}\label{theorem:cd}
RECEIPT CD (alg.\ref{alg:cd}) correctly computes the vertex-subsets corresponding to every tip number range.
\end{theorem}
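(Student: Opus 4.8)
The plan is to establish the set equality $U_i = \{u \in U : \theta_u \in R_i\}$ for every index $i$. I would obtain one containment immediately from the two preceding lemmas, and then upgrade it to equality using the disjointness of the ranges together with the fact that RECEIPT CD assigns each vertex of $U$ to exactly one subset.

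First I would read off the forward containment directly from Lemmas~\ref{lemma:cd1} and~\ref{lemma:cd2}. Lemma~\ref{lemma:cd1} gives $u \in U_i \implies \theta_u < \theta(i+1)$ and Lemma~\ref{lemma:cd2} gives $u \in U_i \implies \theta_u \geq \theta(i)$. Combining them, any vertex assigned to $U_i$ satisfies $\theta(i) \leq \theta_u < \theta(i+1)$, i.e. $\theta_u \in R_i$, so $U_i \subseteq \{u \in U : \theta_u \in R_i\}$.

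Next I would prove the reverse containment via a partition argument. I first note that RECEIPT CD assigns every vertex of $U$ to exactly one subset: the outer loop runs until $U$ is empty, every peeled vertex is simultaneously removed from $U$ and inserted into the current subset (line 11, alg.\ref{alg:cd}), and any vertices surviving after $P$ partitions are collected into the leftover subset $U_{P+1}$ (sec.\ref{sec:adaptive}). Hence $\{U_i\}$ is a disjoint cover of $U$. Now fix any $u$ with $\theta_u \in R_i$; it belongs to some $U_j$, and the forward containment then forces $\theta_u \in R_j$ as well. Since the ranges are pairwise disjoint ($i \neq j \implies R_i \cap R_j = \{\phi\}$), having $\theta_u \in R_i \cap R_j$ compels $j = i$, so $u \in U_i$. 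This yields $\{u \in U : \theta_u \in R_i\} \subseteq U_i$, and together with the forward containment completes the equality.

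The step I expect to require the most care is pinning down this partition structure rather than the inequalities themselves: I must verify that every vertex is eventually peeled, so that $\{U_i\}$ genuinely covers $U$, and that the ranges truly tile the full spectrum $[0,\theta^{max}]$ via the boundary conditions $\theta(1)=0$ and $\theta(P+1) > \theta^{max}$, including the role of the leftover subset $U_{P+1}$. Once these bookkeeping facts are secured, the theorem follows from the two lemmas and range-disjointness with no additional computation.
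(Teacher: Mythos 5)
Your proposal is correct and follows essentially the same route as the paper: the paper's entire proof is ``Follows directly from lemmas~\ref{lemma:cd1} and \ref{lemma:cd2},'' which is exactly your forward containment. Your additional partition argument (every vertex lands in exactly one $U_j$, the ranges $R_j$ are pairwise disjoint, hence containment upgrades to set equality) is a careful spelling-out of bookkeeping the paper leaves implicit, not a different approach.
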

\begin{proof}
Follows directly from lemmas~\ref{lemma:cd1} and \ref{lemma:cd2}.
\end{proof}



\begin{theorem}\label{theorem:receipt}
\text{RECEIPT} correctly computes the tip numbers for all $u\in U$.
\end{theorem}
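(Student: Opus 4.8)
The plan is to reduce RECEIPT's correctness to that of sequential \texttt{BUP} (alg.\ref{alg:bottomup}), which I take as the reference correct algorithm, by showing that the support trajectory of every vertex $u \in U_i$ during RECEIPT FD coincides with the trajectory \texttt{BUP} would produce for $u$. First I would invoke Theorem~\ref{theorem:cd} to fix, for each $u$, its correct subset membership $u \in U_i$ with $\theta_u \in [\theta(i), \theta(i+1))$. The central observation I would establish is the ordering fact that every tip number in $U_i$ is strictly smaller than every tip number in $\cup_{j>i} U_j$: indeed $\theta_u < \theta(i+1)$ for $u\in U_i$ by Lemma~\ref{lemma:cd1}, while $\theta_v \geq \theta(j) \geq \theta(i+1)$ for $v \in U_j$ with $j>i$ by Lemma~\ref{lemma:cd2} and monotonicity of the range boundaries. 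Consequently, since \texttt{BUP} assigns tip numbers in non-decreasing order, it peels all of $U_i$ before touching any vertex of $\cup_{j>i}U_j$.

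Next I would pin down the meaning of the initialization vector. Using Lemma~\ref{lemma:parallel} (parallel updates in CD are correct) together with the order-independence of Lemma~\ref{lemma:independence}, I would argue that $\bowtie^{init}_u$ for $u\in U_i$ equals the number of butterflies $u$ shares with $\cup_{j\geq i}U_j$ — precisely the support a correct \texttt{BUP} execution would record for $u$ at the moment $U_1\cup\dots\cup U_{i-1}$ has been fully peeled, independent of the internal order CD used (the peeled set consists only of vertices with tip number $<\theta(i)\leq\theta_u$, so Lemma~\ref{lemma:independence} applies). This quantity splits naturally as (butterflies within $U_i$) $+$ (butterflies with $\cup_{j>i}U_j$), where only the first summand is represented in the induced subgraph $G_i$.

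The heart of the argument is then a coupling, formalized as an induction on the peeling steps, between FD peeling $G_i$ and \texttt{BUP} peeling the same vertices in the full graph. In both executions the support of a vertex $u\in U_i$ starts at $\bowtie^{init}_u$ and, assuming the trajectories agree up to a given step, is decremented — with the identical cap $\max\{\theta_u, \cdot\}$ supplied by the matching tip number of the just-peeled vertex — only when some \emph{other} vertex of $U_i$ is peeled; the butterflies $u$ shares with $\cup_{j>i}U_j$ are never decremented because those vertices remain unpeeled while $U_i$ is being processed. Hence the inductive step preserves equality of the supports, so at every step the minimum-support vertex selected is the same and is assigned the same value. Here the ordering fact does double duty: it guarantees that the global minimum-support vertex in \texttt{BUP} always lies in $U_i$ while $U_i$ is nonempty, so \texttt{BUP}'s global minimum and FD's minimum-over-$G_i$ coincide.

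I expect the main obstacle to be the bookkeeping in this coupling — specifically, verifying that the $\max\{\theta_u, \cdot\}$ capping does not desynchronize the two trajectories (handled by the induction, since the cap value is itself the matching tip number), and that the decrements \texttt{BUP} applies from peeling $u\in U_i$ onto vertices of $\cup_{j>i}U_j$, which FD omits, are genuinely irrelevant. For the latter I would note that these decrements are already folded into the next subsets' $\bowtie^{init}_{U_j}$ vectors by the same argument as in the second paragraph, so they are accounted for exactly once when each $U_j$ is processed. Finally, since the subsets are peeled independently and the coupling applies verbatim to each $i$, concatenating the results over all $P$ subsets yields the correct $\theta_u$ for every $u\in U$.
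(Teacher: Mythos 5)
Your proposal is correct and follows essentially the same route as the paper's proof: fix subset membership via Theorem~\ref{theorem:cd}, use Lemmas~\ref{lemma:independence} and~\ref{lemma:parallel} to identify $\bowtie^{init}_u$ with the support \texttt{BUP} would have after peeling $U_1\cup\dots\cup U_{i-1}$, note that all of $\cup_{j>i}U_j$ is peeled after $U_i$ in \texttt{BUP}, and conclude that peeling $G_i$ in RECEIPT FD applies the same updates (since every butterfly between two vertices of $U_i$ survives in $G_i$) and hence yields the same tip numbers as \texttt{BUP}. Your explicit step-by-step coupling induction and the handling of the $\max\{\theta_u,\cdot\}$ capping simply formalize what the paper asserts in one line, so this is a refinement of, not a departure from, the paper's argument.
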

\begin{proof}
Consider an arbitrary vertex $u\in U_i$. From theorem~\ref{theorem:cd}, $\theta(i)\leq \theta_u< \theta(i+1)$. Let $S=U_1\cup U_2\dots U_{i-1}$ denote the set of vertices peeled before $U_i$ in 
RECEIPT CD. For all vertices $u'\in S$, $\theta_{u'}<\theta_u$ and hence, $S$ will be completely 
peeled before $u$ in \texttt{BUP} as well. \kledit{We now compare  peeling $U_i$ in RECEIPT FD 
(alg.\ref{alg:fd}) to peeling $U_i$ in sequential algorithm \texttt{BUP}, and show that the two output identical tip numbers. It is well known that \texttt{BUP} correctly computes tip numbers~\cite{sariyuce2016peelingarxiv}.}

\kledit{Note that initial support of $u$ in RECEIPT FD i.e. $\bowtie^{init}_u$ is the support of $u$ in RECEIPT
CD after $S$ is peeled (sec.\ref{sec:CD}, lines 6-7 of alg.\ref{alg:cd}). From lemmas \ref{lemma:independence} and \ref{lemma:parallel}. this is equal to the
support of $u$ in \texttt{BUP} after $S$ is peeled. Further, by theorem~\ref{theorem:cd}, any vertex in $U_{i+1}\cup U_{i+2}\dots\cup U_{P}$ has tip number strictly greater than all vertices in $U_i$, and will be peeled after $U_i$ in \texttt{BUP}.}
\kledit{Next, we note that subgraph $G_i$ is induced on subset $U_i$ and entire set $V$. Thus, every butterfly shared between any two vertices in $U_i$ is present in $G_i$. Therefore, for any vertex $u'\in U_i$ peeled before $u$, the update $\bowtie_{u', u}$ computed by \texttt{BUP} and RECEIPT FD will be same.
Hence,  \texttt{BUP} and sequential peeling of $U_i$ in RECEIPT FD apply the same support updates to vertices in $U_i$ and therefore, follow the same order of vertex peeling. Thus, the final tip number $\theta_u$ computed by RECEIPT FD will be the same as that computed by \texttt{BUP}.}
\end{proof}

It is important for a parallel algorithm to be not only scalable, but also computationally efficient. The following theorem shows that for a reasonable upper bound\footnote{In practice, we use $P\ll \frac{\sum_{u\in U}\sum_{v\in N_u}{d_v}}{n\log{n}}$ for large graphs.} on $P$, RECEIPT is at least as efficient as the best sequential tip decomposition algorithm \texttt{BUP}.

\begin{theorem}\label{theorem:complexity}
For $P=\mathcal{O}\left(\frac{\sum_{u\in U}\sum_{v\in N_u}{d_v}}{n\log{n}}\right)$ vertex subsets, RECEIPT is work-efficient with computational complexity of $\mathcal{O}\left(\sum_{u\in U}\sum_{v\in N_u}{d_v}\right)$.
\end{theorem}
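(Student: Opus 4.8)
The plan is to charge the total work of RECEIPT to three disjoint phases --- the initial per-vertex counting, Coarse-grained Decomposition (CD), and Fine-grained Decomposition (FD) --- and to show that each is $\mathcal{O}(\Phi)$, where $\Phi := \sum_{u\in U}\sum_{v\in N_u} d_v$. A useful first observation is that $\Phi = \sum_{v\in V} d_v^2 = \Theta(\wedge_U)$, so ``number of wedges traversed'' and $\Phi$ are interchangeable up to constants, and I will use this freely. For the counting phase I would invoke the Chiba--Nishizeki bound quoted in sec.\ref{sec:counting}: alg.\ref{alg:counting} traverses $\mathcal{O}(\alpha m) = \mathcal{O}\!\left(\sum_{(u,v)\in E}\min(d_u,d_v)\right)$ wedges with $\mathcal{O}(1)$ work each, and since $\min(d_u,d_v)\le d_v$ this is at most $\Phi$. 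Computing the auxiliary wedge-count array $w[u]=\sum_{v\in N_u}(d_v-1)$ costs only $\mathcal{O}(m)\le\mathcal{O}(\Phi)$.

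For CD I would separate the wedge-traversal work from the bookkeeping overhead. Every vertex of $U$ is assigned to exactly one subset, so \texttt{update} (line 13, alg.\ref{alg:cd}) is called on it exactly once over the whole of CD; a single such call traverses $\sum_{v\in N_u}(d_v-1)$ wedges, and summed over all vertices the traversal work is at most $\Phi$ (this already absorbs both orientations of each wedge). The remaining overhead is per-subset: the support-initialization copy (lines 6--7), the first-iteration $activeSet$ scan over the remaining vertices (line 9), and the \texttt{findHi} call. The first two are $\mathcal{O}(n)$ each. For \texttt{findHi}, implemented as described in sec.\ref{sec:CD} by binning the remaining vertices on their current support and taking a prefix sum, I would argue a cost of $\mathcal{O}(n\log n)$, dominated by ordering the distinct support values. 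All \emph{subsequent} iterations within a subset rebuild $activeSet$ only from vertices whose support changed, which is charged to the update work already counted. Hence the CD overhead is $\mathcal{O}(Pn\log n)$.

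For FD the crucial point is that the induced subgraphs collectively contain no more wedges than $G$. Since the $U_i$ partition $U$, each $v\in V$ satisfies $d_v=\sum_i |N_v\cap U_i|$, and by convexity $\sum_i \binom{|N_v\cap U_i|}{2}\le\binom{d_v}{2}$; summing over $v$ gives $\sum_i \wedge_{U_i}\le \wedge_U=\mathcal{O}(\Phi)$. By theorem~\ref{theorem:receipt}, peeling $G_i$ reproduces the \texttt{BUP} run on that subgraph, whose cost is dominated by these $\wedge_{U_i}$ wedge traversals (min-support extraction being charged to the updates via bucketing), so the peeling work over all subsets is $\mathcal{O}(\Phi)$. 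It remains to bound construction of the $P$ induced subgraphs and support initialization from $\bowtie^{init}_{U_i}$: the former is $\mathcal{O}(m+Pn)$ (selecting incident edges costs $\sum_i\sum_{u\in U_i}d_u=m$, plus $\mathcal{O}(n)$ per subset for the $V$ side) and the latter is $\mathcal{O}(n)$ in total. Since $m\le\Phi$, this is within budget up to the $\mathcal{O}(Pn)$ term.

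Finally I would collect the terms: the total is $\mathcal{O}(\Phi)+\mathcal{O}(Pn\log n)+\mathcal{O}(Pn)=\mathcal{O}(\Phi+Pn\log n)$. Substituting the hypothesis $P=\mathcal{O}\!\left(\Phi/(n\log n)\right)$ makes $Pn\log n=\mathcal{O}(\Phi)$, yielding an overall complexity of $\mathcal{O}(\Phi)$ and matching sequential \texttt{BUP}, i.e.\ work-efficiency. I expect the main obstacle to be the FD accounting: first, the convexity argument that the induced subgraphs do not inflate the wedge count is what prevents the two-step design from doubling (or worse) the asymptotic work; and second, pinning the per-subset overheads --- especially \texttt{findHi} --- to an $\mathcal{O}(n\log n)$ bound is precisely what dictates the stated threshold on $P$, since a looser overhead estimate would force a smaller admissible $P$.
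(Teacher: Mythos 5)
Your proposal is correct and follows essentially the same route as the paper's proof: the same three-way charge (counting via the Chiba--Nishizeki bound, CD wedge work charged once per peeled vertex plus $\mathcal{O}(Pn\log n)$ of \texttt{findHi}/scan overhead per subset, FD wedge work bounded by that of $G$ plus $\mathcal{O}(Pn+m)$ for subgraph construction), combined identically under the hypothesis on $P$. Two minor points of divergence are worth noting. First, for the FD wedge bound you argue by convexity that $\sum_i \wedge_{U_i}\le \wedge_U$, whereas the paper simply observes that the wedges traversed when peeling $u\in U_i$ in $G_i$ are a subset of those traversed when $u$ was peeled in CD; both arguments are valid and interchangeable. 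Second, your parenthetical that min-support extraction can be ``charged to the updates via bucketing'' is the one loose step: unit-width bucketing over the support range is not work-efficient in general, since $\theta^{max}$ is not bounded by $\Phi$ --- e.g.\ with $U=\{u,u'\}$ both adjacent to all of $V$, one has $\Phi=\Theta(\abs{V})$ but $\theta_u=\binom{\abs{V}}{2}=\Theta(\Phi^2)$, so scanning buckets across the range would dominate. The paper sidesteps exactly this by specifying a Fibonacci heap ($\mathcal{O}(\log n)$ per extract-min, $\mathcal{O}(1)$ per support decrease), which is where the $n\log n$ term in its FD bound comes from; substituting that (or a lazy structure holding only occupied support values) for your parenthetical makes your accounting airtight.
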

\begin{proof}
\underline{\textbf{RECEIPT CD}} -- 
\kledit{It initializes the vertex support using ${\mathcal{O}\left(\sum_{(u,v)\in E}\min\left(d_u, d_v\right)\right)}$ complexity per-vertex butterfly counting. Range computation for each subset requires constructing a $\mathcal{O}(\abs{U})$ size hashmap ($work$) whose keys are the unique support values in $\bowtie_U$. In this hashmap, wedge counts of all vertices with support $\bowtie$ are accumulated in value $work[\bowtie]$. Next, $work$ is sorted on the keys and a parallel prefix sum is computed over the values so that the final value $work[\bowtie]$ represents cumulative wedge count of all vertices with support less than or equal to $\bowtie$. Parallel implementations of hashmap generation, sorting and prefix scan perform $\mathcal{O}\left(\abs{U}\log{\abs{U}}\right) = \mathcal{O}\left(n\log{n}\right)$ work. Computing scaling factor $s_i$ for adaptive range determination requires aggregating wedge count of vertices in $U_i$, contributing $\mathcal{O}\left(\abs{U} \right) = \mathcal{O}\left(n\right)$ work over all subsets.}

Constructing $activeSet$ for first peeling iteration of each subset requires an $\mathcal{O}(n)$ complexity parallel filtering on $\bowtie_U$. Subsequent iterations construct $activeSet$ by tracking support updates doing $\mathcal{O}(1)$ work per update. 
\kledit{For every vertex $u\in U$, the \texttt{\textsc{update}} routine is called once when $u$ is peeled, and it traverses at most $\sum_{v\in N_u}{d_v}$ wedges. At most one support update is generated per wedge, resulting in $\mathcal{O}(1)$ work per wedge \kledit{(RECEIPT CD stores vertex supports in an array $\bowtie_U$}).} Thus, the complexity of RECEIPT CD is $\mathcal{O}\left(\sum_{u\in U}\sum_{v\in N_u}{d_v} + Pn\log{n}\right)$.

\noindent\underline{\textbf{RECEIPT FD}} -- For every subset $U_i$, alg.\ref{alg:fd} creates an induced subgraph $G_i(U_i, V_i, E_i)$ in parallel. This requires $\mathcal{O}\left(n + \abs{E_i}\right)$ work for subset $U_i$ and $\mathcal{O}\left(Pn + m\right)$ work over all $P$ subsets. 
When $u\in U_i$ is peeled, the corresponding call to \texttt{\textsc{update}} explores wedges in the induced subgraph $G_i$ and generates support updates to other vertices in $U_i$. These are a subset of the wedges and support updates generated when $u$ was peeled in RECEIPT CD.
\kledit{A fibonacci heap can be used to extract minimum support vertex ($\mathcal{O}\left(\log{n}\right)$ work per vertex), and enable constant complexity updates ($\mathcal{O}\left(\sum_{v\in N_u}{d_v} \right)$ work in \texttt{\textsc{update}} call for $u$). Thus, the complexity of RECEIPT FD is $\mathcal{O}\left(\sum_{u\in U}\sum_{v\in N_u}{d_v} + Pn + n\log{n}\right)$.}

Combining both the steps, the total work done by RECEIPT is $\mathcal{O}\left(\sum_{u\in U}\sum_{v\in N_u}{d_v} + Pn\log{n}\right)=\mathcal{O}\left(\sum_{u\in U}\sum_{v\in N_u}{d_v}\right)$, if $P=\mathcal{O}\left(\frac{\sum_{u\in U}\sum_{v\in N_u}{d_v}}{n\log{n}}\right)$. This is the same as sequential \texttt{BUP} and hence, RECEIPT is work-efficient.
\end{proof}

\section{OPTIMIZATIONS}\label{sec:optimizations}
Despite the parallelism potential, RECEIPT may take hours or days to process large graphs such as \textit{TrU}, that contain hundreds of trillions of wedges (sec.\ref{sec:introduction}). To this purpose, we develop novel optimizations that \textit{exploit the properties of} RECEIPT \textit{to dramatically improve its computational efficiency} in practice, making it feasible to decompose graphs like \textit{TrU} in minutes (objective 2, sec.\ref{sec:challenges}).

\subsection{Hybrid Update Computation (HUC)}\label{sec:hybrid}
The complexity of RECEIPT is dominated by wedge traversal done during peeling in RECEIPT CD. In order to reduce this traversal, we exploit the following insights about the behavior of counting and peeling algorithms:

\begin{itemize}
    \item Butterfly counting is computationally efficient (low complexity) and easily parallelizable (sec.\ref{sec:counting}).
    \item Some peeling iterations in RECEIPT CD may peel a large number of \looseness=-1vertices.
\end{itemize}

\kledit{Given a vertex set ($activeSet$) to peel, we compute the cost of peeling $C_{peel}$ as $\sum_{u\in activeSet}{\sum_{v\in N_u} d_v}$, which is the total wedge count of vertices in $activeSet$. However, the cost of re-counting butterflies $C_{rcnt}$, is computed as $\sum_{(u,v)\in E}\min\left(d_u, d_v\right)$ which represents the bound on wedge traversal of counting (sec.\ref{sec:background}). Thus, if $C_{peel}$ 
exceeds $C_{rcnt}$, we \textit{re-compute butterflies} for all remaining vertices in $U$ instead
of computing support updates obtained by peeling $activeSet$.} 
This optimization is denoted Hybrid Update Computation (HUC). 

We note that compared to RECEIPT CD, HUC is relatively less beneficial for RECEIPT FD because:
\begin{enumerate*}[label=(\alph*)]
\item the induced subgraphs have significantly less wedges than the original graph, and 
\item few vertices are typically deleted per peeling iteration in RECEIPT FD.
\end{enumerate*}
Further, re-counting for subset $U_i$ in RECEIPT FD must account for butterflies with vertices in $\cup_{j>i}{\{U_j\}}$. This external contribution for a vertex $u$ can be computed by deducting the butterfly count of $u$ within $G_i$, from $\bowtie^{init}_u$.

\subsection{Dynamic Graph Maintenance (DGM)}\label{sec:dynamic}
We note that after a vertex $u$ is peeled in RECEIPT CD or RECEIPT FD, it is excluded from future computation in the respective step. However, the graph data structure (adjacency list/Compressed Sparse Row) still contains edges to $u$ interleaved with other edges. Consequently, wedges incident on $u$ are still explored after $u$ is peeled. To prevent such wasteful exploration, we \textit{periodically update the 
data structures} to remove edges incident on peeled vertices. We denote this optimization as Dynamic Graph Maintenance (DGM).

The cost of DGM is determined by the work done in parallel compaction of all adjacency lists, which grows linearly with the number of edges in the graph. 
Therefore, if the adjacency lists are updated only after $m$ wedges have been traversed since previous update, DGM will not 
alter the theoretical complexity of RECEIPT and pose negligible practical overhead.

\section{Experiments}\label{sec:exp}
\subsection{Setup}\label{sec:setup}
We conduct the experiments on a 36 core dual-socket linux server with two Intel Xeon E5-2695 v4 processors@ 2.1GHz and 1TB DRAM. All algorithms are implemented in C++-14 and are 
compiled using G++ 9.1.0 with the -O3 optimization flag. We use OpenMP v4.5 for multithreading. \vspace{1.5mm}

\noindent\textbf{\textit{Baselines}}: We compare RECEIPT against the sequential \texttt{BUP} algorithm (alg.\ref{alg:bottomup}) and its parallel variant \texttt{ParB}~\cite{shiParbutterfly}. \texttt{ParB} resembles P\textsc{ar}B\textsc{utterfly} with BATCH mode peeling\footnote{We were unable to verify the correctness of tip numbers generated by public code for  P\textsc{ar}B\textsc{utterfly} and hence, implemented it ourselves for comparison.}. \texttt{ParB} uses the bucketing structure of Julienne~\cite{julienne} with $128$ buckets as given in \cite{shiParbutterfly}.\vspace{1mm}

\noindent\textbf{\textit{Datasets}}: We use six unweighted bipartite graphs obtained from the KOBLENZ  collection~\cite{konect}. To the best of our knowledge, these are some of the largest publicly available bipartite datasets. \kledit{Within each graph, number of wedges with endpoints in one vertex set can be significantly different than the other, as can be seen from $\wedge^{BUP}$ in table~\ref{table:performance}. We label the vertex set with higher number of wedges as $U$ and the other as $V$}, and accordingly suffix "U" or "V" to the dataset name to identify which vertex set is \looseness=-1decomposed.

\begin{table*}[htbp]
\caption{Bipartite Datasets for evaluation with the corresponding number of butterflies ($\bowtie_G$) and wedges ($\wedge_G$) in billions, and maximum tip numbers for $U$ ($\theta^{max}_U$) and $V$ ($\theta^{max}_V$). \kledit{$d_U$ and $d_V$ denote the average degree of vertices in $U$ and $V$, respectively.}}
\label{table:datasets}
\resizebox{\linewidth}{!}{%
\begin{tabular}{|c|c|c|c|c|c|c|c|c|c|}
\hline
\textbf{Dataset} & \textbf{Description}                           & $\mathbf{\abs{U}}$ & $\mathbf{\abs{V}}$ & $\mathbf{\abs{E}}$ & \kledit{$\mathbf{d_U\ /\ d_V}$} & $\mathbf{\boldsymbol{\bowtie}_G}$(in B) & $\mathbf{\boldsymbol{\wedge}_G}$(in B) & $\mathbf{\boldsymbol{\theta}^{max}_U}$ &$\mathbf{\boldsymbol{\theta}^{max}_V}$  \\ \hline\hline
ItU, ItV         & Pages and editors from Italian Wikipedia       & 2,255,875                                          & 137,693                        & 12,644,802        & \kledit{5.6 / 91.8}                                 & 298 

                                                                                                                 & 361                          &  1,555,462 &  5,328,302,365                                                                                          \\ \hline
DeU, DeV         & Users and tags from www.delicious.com          & 4,512,099                                          & 833,081                                            & 81,989,133 & \kledit{18.2 / 98.4}                                         & 26,683                                                                                                                & 1,446 &  936,468,800 &  91,968,444,615 
                                                                                                                \\ \hline
OrU, OrV         & Users' group memberships in Orkut         & 2,783,196                                          & 8,730,857                                          & 327,037,487    & \kledit{117.5 / 37.5}                                    & 22,131                                                                                                                & 2,528 &  88,812,453 &  29,285,249,823                                                                                                               \\ \hline
LjU, LjV         & Users' group memberships in Livejournal   & 3,201,203                                          & 7,489,073                                          & 112,307,385   & \kledit{35.1 / 15}                                     & 3,297                                                                                                                 & 2,703 &  4,670,317 &  82,785,273,931 
                                                                                                               \\ \hline
EnU, EnV         & Pages and editors from English Wikipedia       & 21,504,191                                         & 3,819,691                                          & 122,075,170  & \kledit{5.7 / 32}                                      & 2,036                                                                                                                 & 6,299 &  37,217,466 &  96,241,348,356 

                                                                                                                \\ \hline
TrU, TrV         & Internet domains and trackers in them & 27,665,730                                         & 12,756,244                                         & 140,613,762 & \kledit{5.1 / 11}                                        & 20,068                                                                                                                & 106,441 &  18,667,660,476 &  3,030,765,085,153 
                                                                                                             \\ \hline
\end{tabular}
}
\end{table*}

\begin{figure}[htbp]
    \centering
\includegraphics[width=0.9\linewidth]{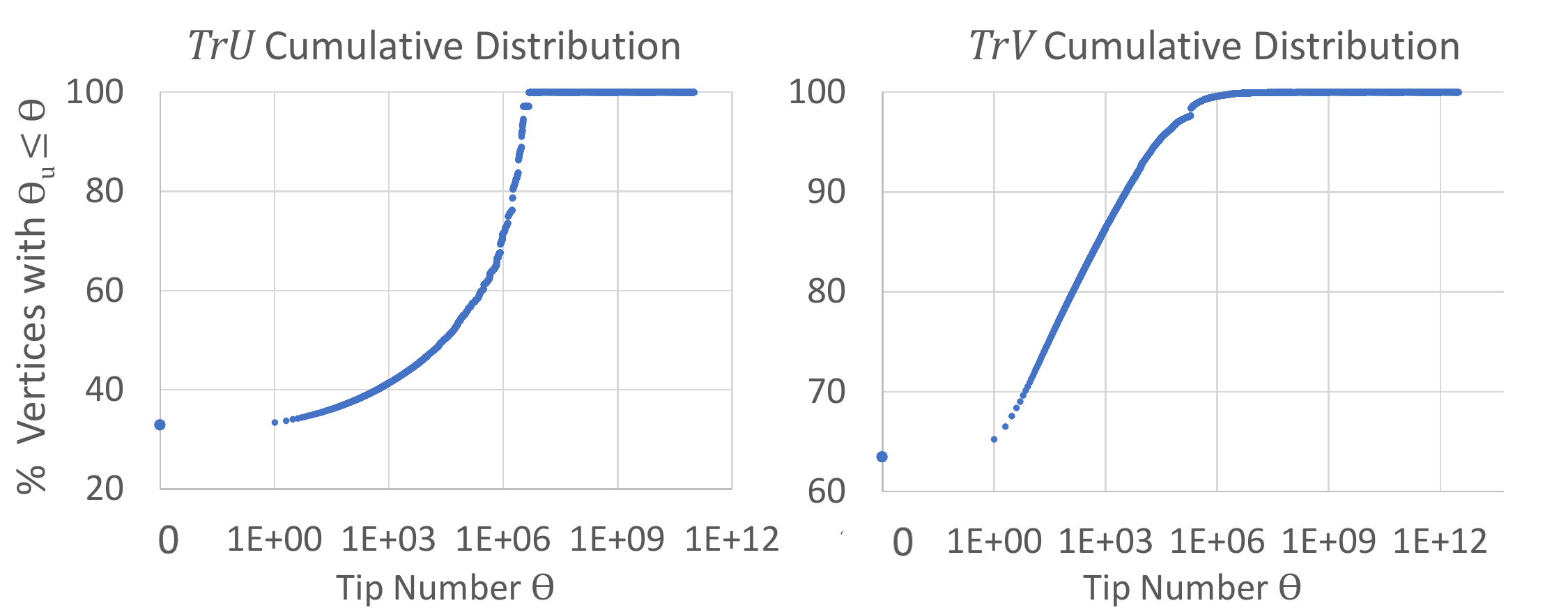}
\caption{Tip number distribution in Trackers graph -- y-axis shows percentage vertices with tip number less than abscissa.\vspace{-2mm}}
    \label{fig:tipNumber}
\end{figure}

Note that the maximum tip numbers are extremely high because of few high degree vertices sharing a large number of common neighbors. However, most of the vertex' tip numbers lie in a much smaller range as shown in fig.\ref{fig:tipNumber}. For example, $99.98$\% vertices in $TrU$ have tip number $<5$M ($0.027$\% of \looseness=-1maximum).\vspace{1mm}

\noindent\textbf{\textit{Implementation Details}}: Unless otherwise mentioned, the parallel algorithms use all $36$ threads for execution\footnote{We did not observe any benefits from enabling the $2$-way hyperthreading.} and RECEIPT includes all workload optimizations discussed in sec.\ref{sec:optimizations}. In RECEIPT FD and sequential \texttt{BUP}, we use a $k$-way min-heap for efficient retrieval of minimum support vertices. We found it to be faster in practice than the bucketing structure of \cite{sariyucePeeling} or fibonacci heaps.

To analyze the impact of number of vertex subsets on RECEIPT's performance, we vary $P$ from $50$ to $500$. Fig.\ref{fig:parts} reports the effect of $P$ on some large datasets that showed significant performance variation. \kledit{Typically, RECEIPT CD dominates the overall execution time because of the larger number of wedges traversed compared to RECEIPT FD}. The performance of RECEIPT CD improves with a decrease in $P$ because of reduced synchronizations. However, a small value of $P$ reduces parallelism in RECEIPT FD and makes the induced subgraphs larger. \kledit{Consequently, for $P\leq 100$, we observed that RECEIPT FD became the bottleneck for some large graphs such as $TrU$ and $LjU$}. For all the datasets shown in fig.\ref{fig:parts}, execution slows down when $P$ is increased beyond $150$. Based on these observations, we use $P=150$ for all datasets, in rest of the experiments.

\begin{figure}[htbp]
    \centering
\includegraphics[width=0.8\linewidth]{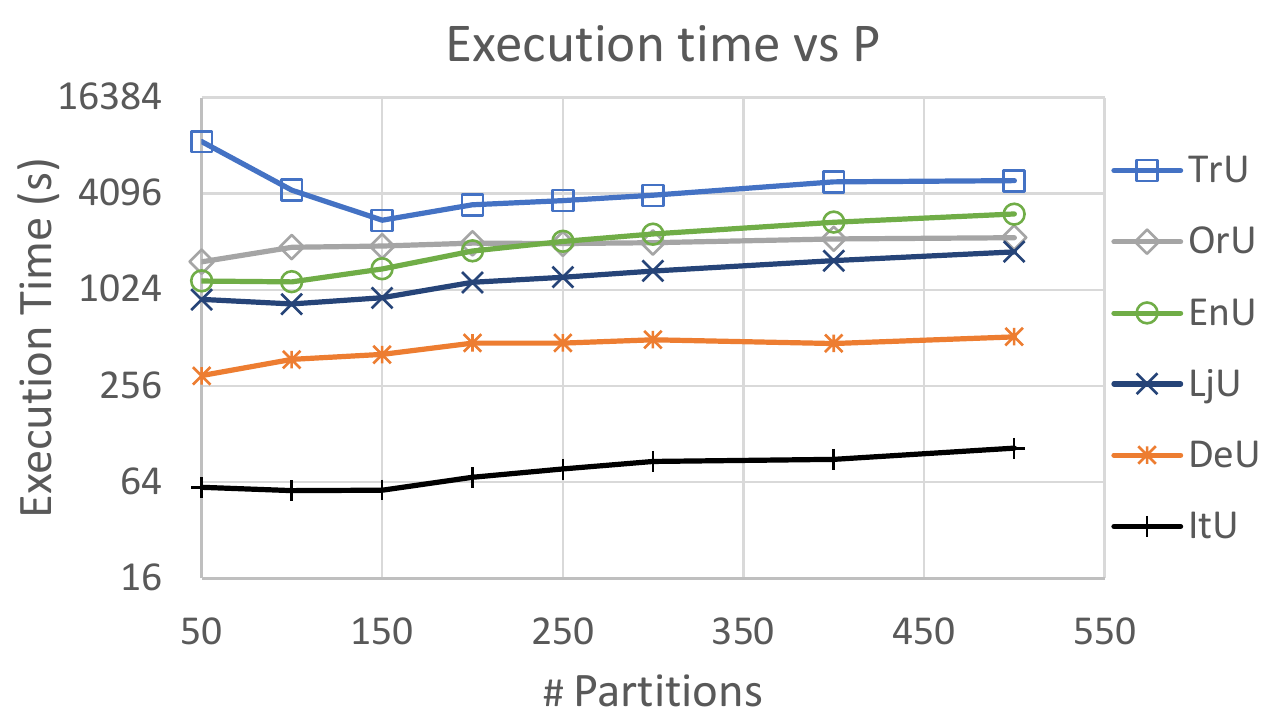}
\caption{Execution time of RECEIPT vs $P$.\vspace{-2mm}}
    \label{fig:parts}
\end{figure}

\subsection{Results}\label{sec:results}
\subsubsection{Comparison with Baselines}\label{sec:baselines}

\begin{table*}[]
\caption{Comparing execution time ($t$), \# wedges traversed ($\bigwedge$) and \# synchronization rounds ($\rho$) of RECEIPT and baseline algorithms. \texttt{ParB} traverses the same \# wedges as \texttt{BUP} and has missing entries due to out-of-memory error. \texttt{pvBcnt} denotes per-vertex butterfly counting and $t=\infty$ denotes execution did not finish in $10$ days.}
\label{table:performance}
\resizebox{0.925\linewidth}{!}{%
\begin{tabular}{cc|c|c|c|c|c|c|c|c|c|c|c|c|}
\cline{3-14}
\multicolumn{1}{c}{}                                                                              & \multicolumn{1}{c|}{\textbf{}} & \multicolumn{1}{c|}{\textbf{ItU}}    & \multicolumn{1}{c|}{\textbf{ItV}}    & \multicolumn{1}{c|}{\textbf{DeU}}     & \multicolumn{1}{c|}{\textbf{DeV}}    & \multicolumn{1}{c|}{\textbf{OrU}}     & \multicolumn{1}{c|}{\textbf{OrV}}     & \multicolumn{1}{c|}{\textbf{LjU}}     & \multicolumn{1}{c|}{\textbf{LjV}}    & \multicolumn{1}{c|}{\textbf{EnU}}     & \multicolumn{1}{c|}{\textbf{EnV}}    & \multicolumn{1}{c|}{\textbf{TrU}}     & \multicolumn{1}{c|}{\textbf{TrV}}     \\ \hline\hline
\multicolumn{1}{|c|}{}                                                                            & \kledit{\texttt{pvBcnt}}                            & \kledit{0.3}                                & \kledit{0.3}                                  & \kledit{8.3}                                & \kledit{8.3}                                  & \kledit{45.6}                                & \kledit{45.6}                                 & \kledit{5.1}                                & \kledit{5.1}                                  & \kledit{6.9}                               & \kledit{6.9}                                  & \kledit{7.8}                                     & \kledit{7.8}                                 \\ \cline{2-14} 
\multicolumn{1}{|c|}{}                                                                            & \texttt{BUP}                            & 3,849                                & 8.4                                  & 12,260                                & 428                                  & 39,079                                & 2,297                                 & 67,588                                & 200                                  & 111,777                               & 281                                  & $\infty$                                     & 5,711                                 \\ \cline{2-14} 
\multicolumn{1}{|c|}{}                                                                            & \texttt{ParB}                           & 3,677                                & 8.1                                  & -                                     & 377.7                                & -                                     & 1,510                                 & -                                     & 132.5                                & -                                     & 198                                  & -                                     & 3,524                                 \\ \cline{2-14} 
\multicolumn{1}{|c|}{\multirow{-4}{*}{\begin{tabular}[c]{@{}c@{}}$\mathbf{t}$(s)\end{tabular}}}         & \texttt{RECEIPT}                        & {\color[HTML]{036400} \textbf{56.8}} & {\color[HTML]{036400} \textbf{3.1}}  & {\color[HTML]{036400} \textbf{402.4}} & {\color[HTML]{036400} \textbf{32.4}} & {\color[HTML]{036400} \textbf{1,865}} & {\color[HTML]{036400} \textbf{136}}   & {\color[HTML]{036400} \textbf{911.1}} & {\color[HTML]{036400} \textbf{23.7}} & {\color[HTML]{036400} \textbf{1,383}} & {\color[HTML]{036400} \textbf{31.1}} & {\color[HTML]{036400} \textbf{2,784}} & {\color[HTML]{036400} \textbf{530.6}} \\ \hline\hline
\multicolumn{1}{|c|}{}                                                                            & \kledit{\texttt{pvBcnt}}                            & \kledit{0.19}                                & \kledit{0.19}                                  & \kledit{20.3}                                & \kledit{20.3}                                  & \kledit{74.8}                                & \kledit{74.8}                                 & \kledit{4.7}                                & \kledit{4.7}                                  & \kledit{6.3}                               & \kledit{6.3}                                  & \kledit{6.3}                                     & \kledit{6.3}                                 \\ \cline{2-14}
\multicolumn{1}{|c|}{}                                                                            & \texttt{BUP}                            & 723                                & 0.57 & 2,861                                 & 70.1 & 4,975                                 & 231.4 & 5,403                                 & 14.3  & 12,583                                & 29.6                                 & 211,156                               & 1,740                                 \\ \cline{2-14} 
\multicolumn{1}{|c|}{\multirow{-3}{*}{\begin{tabular}[c]{@{}c@{}}$\mathbf{\boldsymbol{\bigwedge}}$\\ (billions)\end{tabular}}} & \texttt{RECEIPT}                        & {\color[HTML]{036400} \textbf{71}}   & {\color[HTML]{036400} \textbf{0.56}}                                 & {\color[HTML]{036400} \textbf{1,503}} & {\color[HTML]{036400} \textbf{51.3}}                                 & {\color[HTML]{036400} \textbf{2,728}} & {\color[HTML]{036400} \textbf{170.4}}                                 & {\color[HTML]{036400} \textbf{1,003}} & {\color[HTML]{036400} \textbf{11.7}}                                 & {\color[HTML]{036400} \textbf{2,414}} & {\color[HTML]{036400} \textbf{22.2}} & {\color[HTML]{036400} \textbf{3,298}} & {\color[HTML]{036400} \textbf{658.1}} \\ \hline\hline
\multicolumn{1}{|c|}{}                                                                            & \texttt{ParB}                           & 377,904                              & 10,054                               & 670,189                               & 127,328                              & 1,136,129                             & 334,064                               & 1,479,495                             & 83,423                               & 1,512,922                             & 83,800                               & 1,476,015                             & 342,672                               \\ \cline{2-14} 
\multicolumn{1}{|c|}{\multirow{-2}{*}{$\boldsymbol{\rho}$}}                                                    & \texttt{RECEIPT}                        & {\color[HTML]{036400} \textbf{967}}  & {\color[HTML]{036400} \textbf{280}}  & {\color[HTML]{036400} \textbf{1113}}  & {\color[HTML]{036400} \textbf{406}}  & {\color[HTML]{036400} \textbf{1,160}} & {\color[HTML]{036400} \textbf{639}}   & {\color[HTML]{036400} \textbf{1,477}} & {\color[HTML]{036400} \textbf{456}}  & {\color[HTML]{036400} \textbf{1,724}} & {\color[HTML]{036400} \textbf{453}}  & {\color[HTML]{036400} \textbf{1,335}} & {\color[HTML]{036400} \textbf{1,381}} \\ \hline
\end{tabular}
}
\end{table*}

Table~\ref{table:performance} shows a detailed comparison of various tip decomposition algorithms. Note that $\rho$ represents the number of synchronization rounds. Each round consists of one iteration of peeling all vertices with minimum support (or support in minimum range for RECEIPT)\footnote{Wedge traversal by \texttt{BUP} can be computed without executing alg.\ref{alg:bottomup}, by simply aggregating the $2$-hop neighborhood size of vertices in $U$ or $V$. A given wedge can be traversed twice. Similarly, $\rho$ for \texttt{ParB} can be computed from a modified version of RECEIPT FD where we retrieve all vertices with minimum support in a single iteration.}. Each round involves multiple (but constant) thread synchronizations, for example, once after computing the list of vertices to peel, once after computing the updates etc. RECEIPT FD does not incur any synchronization as the threads in alg.\ref{alg:fd} only synchronize once at the end of computation. For the rest of this section, we will use the term \textit{large datasets} to refer to a graph having large number of wedges with endpoints on the vertex set being peeled such as \textit{ItU, OrU} etc. \vspace{1mm}

\noindent\textbf{\textit{Execution Time}} ($\mathbf{t}$): With up to $\mathbf{80.8\boldsymbol{\times}}$ and $\mathbf{64.7\boldsymbol{\times}}$ speedup over \texttt{BUP} and \texttt{ParB}, respectively, RECEIPT is radically faster than both baselines, for \textit{all} datasets. The speedups are typically higher for large datasets that offer large amount of computation to parallelize and bigger benefits from workload optimizations(sec.\ref{sec:optimizations}). Only RECEIPT is able to successfully tip decompose \textit{TrU}. Contrarily, \texttt{ParB} achieves a maximum $1.6\times$ speedup compared to sequential \texttt{BUP} for \textit{TrV}.\vspace{1mm}

\noindent\textbf{\textit{Wedges Traversed}} ($\bigwedge$): \kledit{For \textit{all} datasets, RECEIPT traverses fewer wedges than the existing baselines. RECEIPT's built-in optimizations achieve up to $\mathbf{64\boldsymbol{\times}}$ reduction in wedges traversed. This enables RECEIPT to decompose large datasets \textit{EnU} and \textit{TrU} in few minutes, unlike baselines that take few days for the same.\vspace{1mm}}

\noindent\textbf{\textit{Synchronization}} ($\rho$): In comparison with \texttt{ParB}, RECEIPT reduces the amount of thread synchronization by up to $\mathbf{1105\boldsymbol{\times}}$. This is primarily because RECEIPT CD peels vertices with a broad range of support in every iteration. This drastic reduction in $\rho$ is the primary contributor to RECEIPT's parallel efficiency.


\subsubsection{Effect of Workload Optimizations}\label{sec:optExp}
Fig.\ref{fig:optWedges} and \ref{fig:optTime} show the effect of HUC and DGM optimizations (sec.\ref{sec:hybrid} and \ref{sec:dynamic}) on the performance of RECEIPT. The execution time closely follows the trend in wedge \looseness=-1workload.

\begin{figure}[htbp]
    \centering
\includegraphics[width=0.9\linewidth]{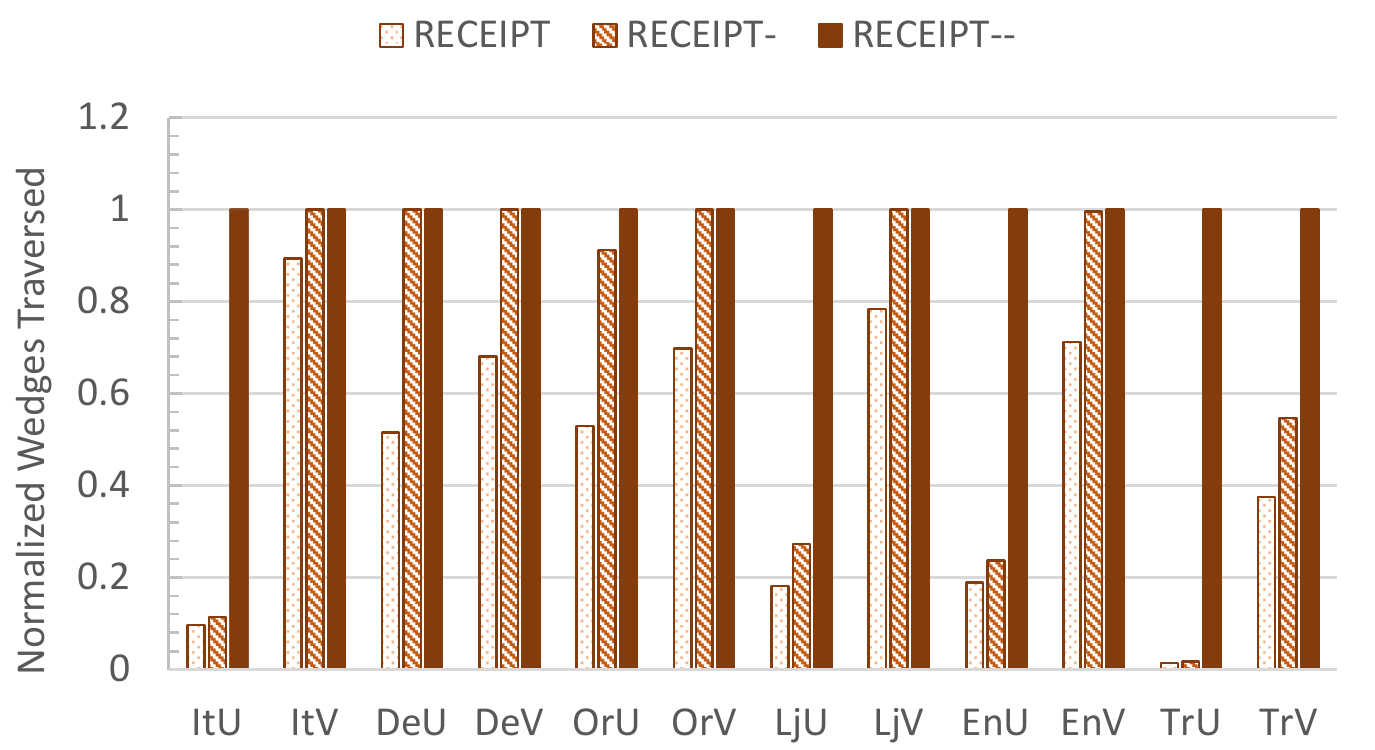}
\caption{Effect of workload optimizations on wedge traversal (normalized with respect to RECEIPT-{}-). RECEIPT- is RECEIPT without DGM. RECEIPT-{}- is RECEIPT without DGM and HUC.\vspace{-1mm}}
    \label{fig:optWedges}
\end{figure}

\begin{figure}[htbp]
    \centering
\includegraphics[width=0.9\linewidth]{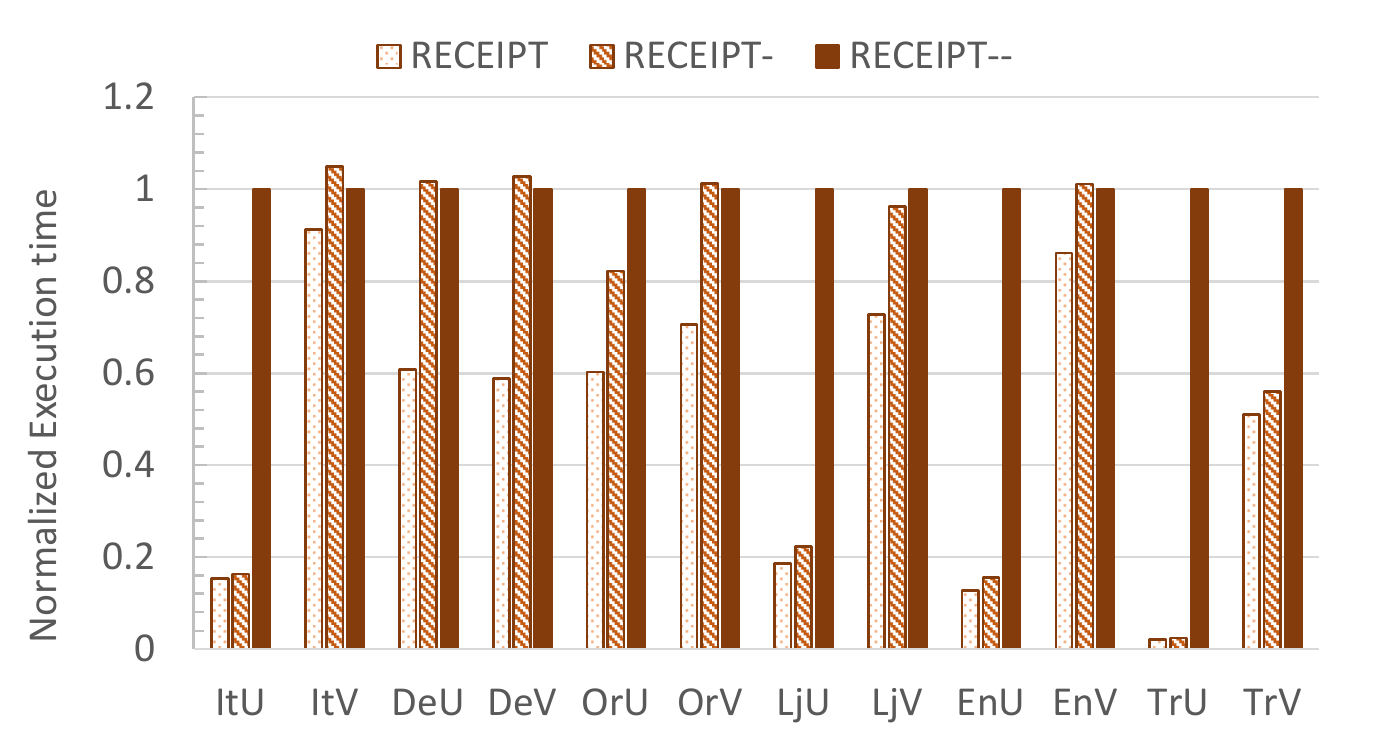}
\caption{Effect of workload optimizations on execution time (normalized with respect to RECEIPT-{}-). RECEIPT- is RECEIPT without DGM. RECEIPT-{}- is RECEIPT without DGM and HUC.\vspace{-2mm}}
    \label{fig:optTime}
\end{figure}

\kledit{HUC reduces wedge traversal by opting to re-count butterflies instead of peeling, in 
selective iterations. A comparison of the the work required
for peeling vertices in $U$ versus counting all butterflies can be an indicator of the benefits of HUC.
Therefore, we define a ratio $r=\frac{\wedge^{peel}}{\wedge^{cnt}}$, where $\wedge^{peel}$ and $\wedge^{cnt}$ denote the 
number of wedges traversed for peeling all vertices in $U$ ($\wedge^{BUP}-\wedge^{pvBcnt}$ in table~\ref{table:performance}) and for butterfly counting 
($\wedge^{pvBcnt}$ in table~\ref{table:performance}), respectively. Datasets with large value of $r$ (for example \textit{ItU, LjU, EnU} and \textit{TrU} with $r>1000$) benefit heavily from HUC optimization, since peeling in high workload iterations is replaced by re-counting, which dramatically reduces wedge traversal. Especially for \textit{TrU} ($r>33,500$), HUC enables \textbf{$57\times$} and \textbf{$42\times$} reduction in wedge traversal and execution time, respectively. Contrarily, in datasets with small value of $r$ (\textit{ItV, DeV, OrV, LjV and EnV} with $r<5$ due to low $\wedge^{peel}$), none of the iterations utilize re-counting. Consequently, performance of RECEIPT- and RECEIPT-{}- is similar for these datasets.}


DGM can potentially half the wedge workload since each wedge has two endpoints, but needs to be traversed only by the vertex that gets peeled first. However, the reduction is less than $2\times$ in practice, because for many wedges, both endpoints get peeled between consecutive DGM data structure updates.
It achieves $1.41\times$ and $1.29\times$ average reduction in wedges and execution time, \looseness=-1respectively.



\subsubsection{RECEIPT Breakup}\label{sec:breakup}
In this section, we analyze the work and execution time contribution of each step of RECEIPT individually. We further split RECEIPT CD (alg.\ref{alg:bottomup}) into a peeling step (denoted as RECEIPT CD), and a per-vertex butterfly counting (\texttt{pvBcnt}) step used for support initialization. 

\begin{figure}[htbp]
    \centering
\includegraphics[width=0.9\linewidth]{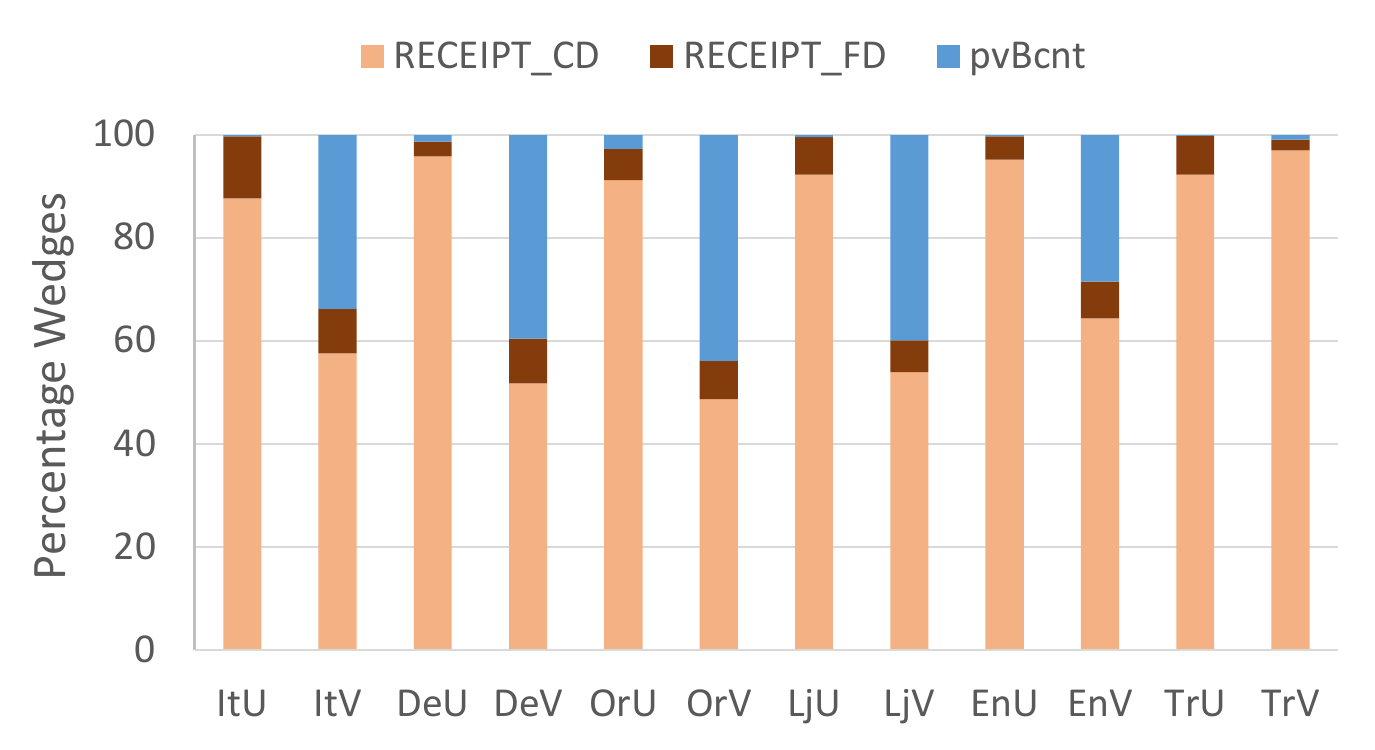}
\caption{Breakup of wedges traversed in RECEIPT\vspace{-1.5mm}}
    \label{fig:wb}
\end{figure}

Fig.\ref{fig:wb} shows a breakdown of the wedges \kledit{traversed during different steps as a percentage of total wedges traversed by RECEIPT}. As discussed in sec.\ref{sec:receipt}, RECEIPT CD traverses significantly more wedges than RECEIPT FD. For all the datasets, RECEIPT FD incurs $<15$\% of the total wedge traversal. Note that for a given graph, the number of wedges explored by \texttt{pvBcnt} is independent of the vertex set being peeled. For example, \textit{ItU} and \textit{ItV} both traverse $188$ million wedges during \texttt{pvBcnt}. However, it's percentage contribution depends on the total wedges explored during the entire tip decomposition, which can vary significantly between $U$ and $V$ vertex sets (table~\ref{table:performance}).

\begin{figure}[htbp]
    \centering
\includegraphics[width=0.9\linewidth]{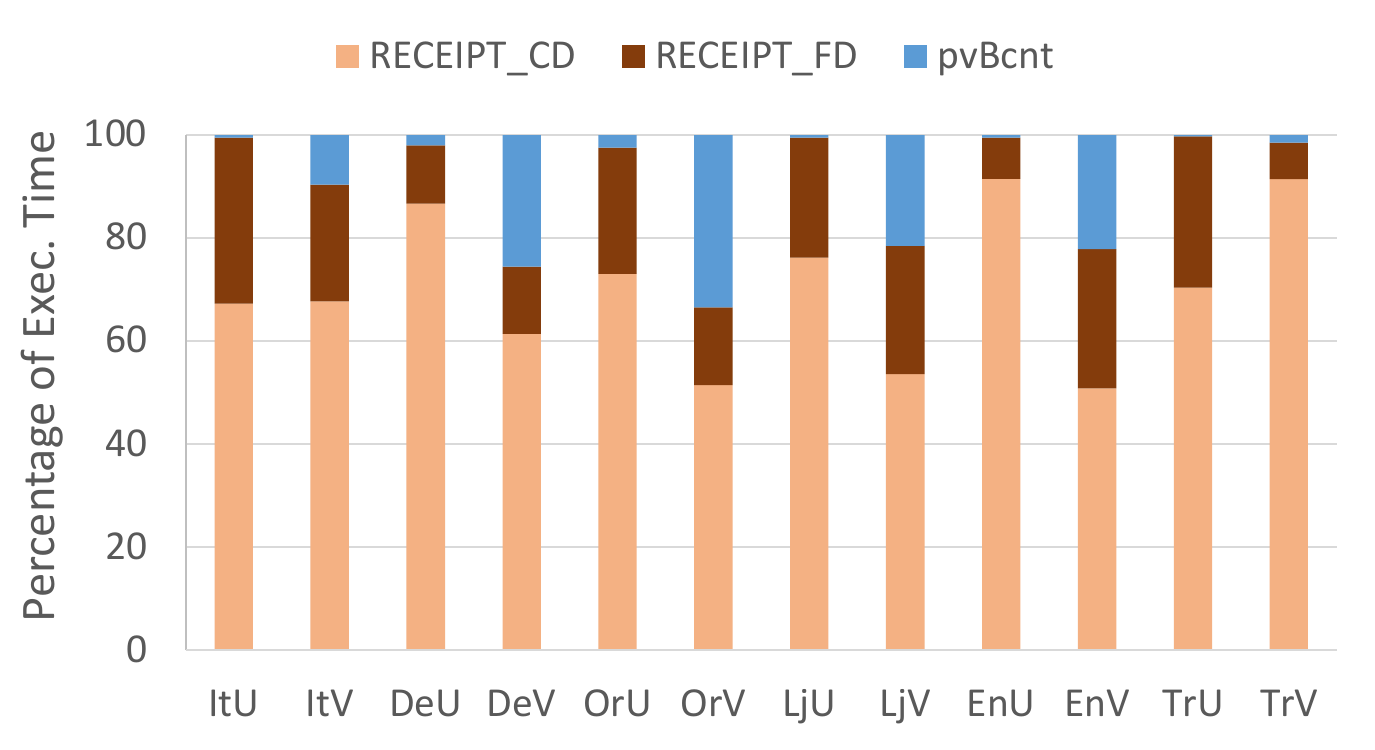}
\caption{Breakup of execution time of RECEIPT\vspace{-1.5mm}}
    \label{fig:tb}
\end{figure}
Fig.\ref{fig:tb} shows a breakdown of the execution time of different steps \kledit{as a percentage of the total execution time of RECEIPT}. Intuitively, the step with most workload i.e. RECEIPT CD, has the largest contribution  ($>50\%$ of the total execution time for all datasets). \kledit{In most datasets with a small value of $r=\frac{\wedge^{peel}}{\wedge^{cnt}}$ (\textit{ItV, DeV, OrV, LjV} and \textit{EnV}), even \texttt{pvBcnt} has a significant share of the execution time. Note that for some datasets, contribution of RECEIPT FD to the execution time is more than its contribution to the wedge traversal. This is due to the following reasons:
\begin{enumerate*}[label=(\alph*)]
\item RECEIPT FD has computational overheads other than wedge exploration, such as creating induced subgraphs and applying support updates to a heap, and 
\item Lower parallel scalability compared to counting and RECEIPT CD (sec.\ref{sec:scalability}).
\end{enumerate*}
Still, RECEIPT FD consumes $<25\%$ of the overall execution time for almost all datasets.
}

\subsubsection{Parallel Scalability}\label{sec:scalability}
\begin{figure}[htbp]
    \centering
\includegraphics[width=0.85\linewidth]{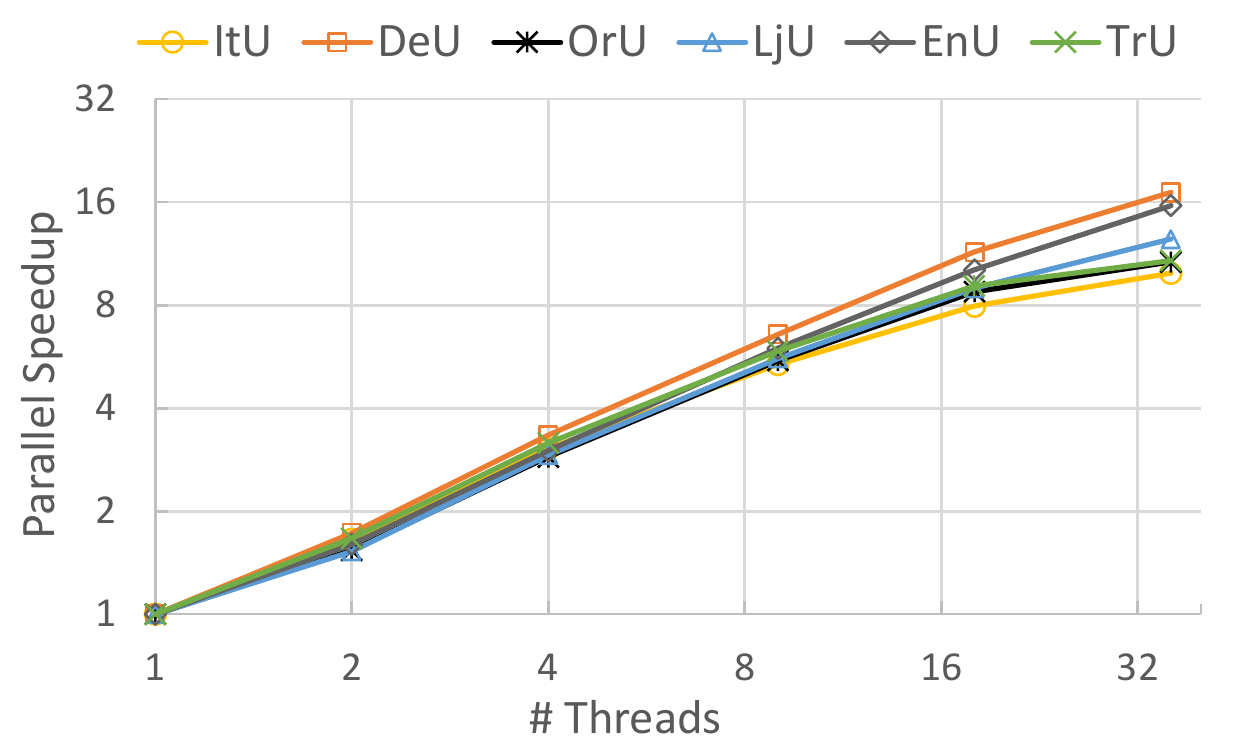}
\caption{Parallel Speedup of RECEIPT when peeling set $U$\vspace{-2.5mm}}
    \label{fig:scaleU}
\end{figure}

\begin{figure}[htbp]
    \centering
\includegraphics[width=0.85\linewidth]{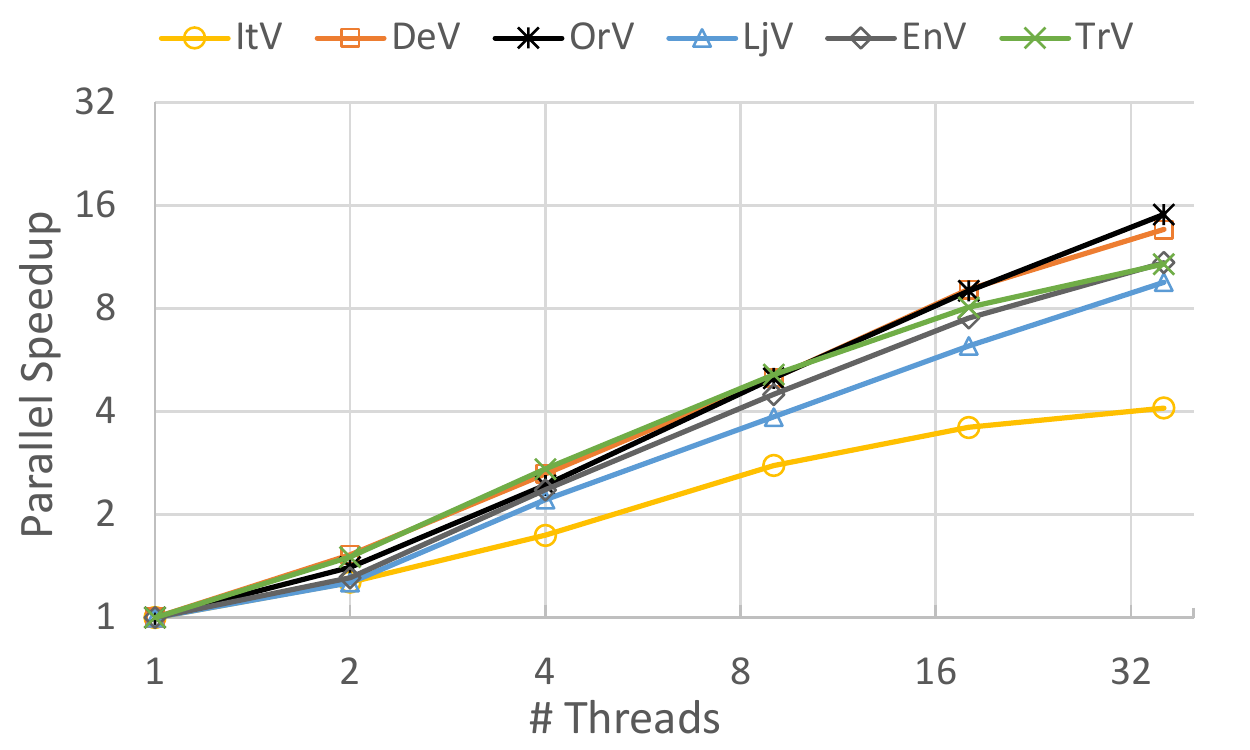}
\caption{Parallel Speedup of RECEIPT when peeling set $V$\vspace{-1mm}}
    \label{fig:scaleV}
\end{figure}

To evaluate the scalability of RECEIPT, we measure its performance with $1, 2, 4, 9, 18$ and $36$ threads. Fig.\ref{fig:scaleU} and \ref{fig:scaleV} show the parallel speedup obtained by RECEIPT over single-threaded execution\footnote{We also developed a sequential version of RECEIPT with no synchronization primitives (atomics) and sequential implementations of basic kernels such as prefix scan, scatter etc. However, the observed performance difference between sequential implementation and single-threaded execution of parallel implementation was negligible.}. 

For most of the datasets, RECEIPT exhibits almost linear scalability. With $36$ threads, RECEIPT
achieves up to $17.1\times$ self-relative speedup. In comparison, we observed that \texttt{ParB}
exhibits at most $2.3\times$ parallel speedup (self-relative) with $36$ threads. RECEIPT's speedup is poor for \textit{ItV} because it is a very small dataset that gets peeled in $<4$s. It requires very little computation ($0.56$B wedges) and hence, employing a large number of threads is not useful. 

\kledit{Typically, datasets with small amount of wedges (\textit{ItV, LjV, EnV}) exhibit lower scalability,  because compared to larger datasets, they experience lower workload per synchronization round on average. For example, \textit{LjV} traverses $86\times$ fewer wedges than \textit{LjU} but incurs only $3.2\times$ fewer synchronizations. This increases the relative overheads of parallelization and restricts the parallel scalability of RECEIPT CD, which is the highest workload step in RECEIPT (fig.\ref{fig:wb}). For example, parallel speedup of RECEIPT CD with $36$ threads is $15.1\times$ for \textit{LjU} but only $7.1\times$ for \textit{LjV}.

In RECEIPT FD, parallel speedup is restricted by workload imbalance across the subgraphs. This is because RECEIPT CD tries to balance total wedge counts of vertex subsets as seen in original graph, whereas work done in RECEIPT FD is determined by wedges in induced subgraphs. Consequently, we observed that for some datasets, parallel scalability of RECEIPT FD is poorer than RECEIPT CD. For example, for \textit{TrU} with $36$ threads, parallel speedup of RECEIPT FD was only $5.3\times$ compared to $13.1\times$ of RECEIPT CD , $12.5\times$ of counting (pvBcnt) and $10.7\times$ of RECEIPT overall.
}

Note that even sequential RECEIPT is much faster than \texttt{BUP} because of the \looseness=-1following:
\begin{enumerate}[leftmargin=*]
    \itemsep0em
    \item \textit{Fewer support updates} -- updates to $\bowtie_u$ from all vertices in a peeling iteration are aggregated into a single \looseness=-1update.
    \item \textit{Lesser work} -- reduced wedge traversal due to HUC and DGM optimizations (sec.\ref{sec:optimizations}).
\end{enumerate}

We also observe that slope of the speedup curve decreases from $18$ to $36$ threads. This could potentially be due to the NUMA effects as RECEIPT does not currently have NUMA specific optimizations. Up to $18$ threads, the execution is limited to single socket but $36$ threads are spawned across two different sockets.

\section{Related Work}\label{sec:related}
Dense subgraph discovery is a very crucial analytic used in several graph applications~\cite{anomalyDet, spamDet, communityDet, fang2020effective, otherapp1, otherapp2}. Researchers have developed a wide array of techniques for mining dense regions in unipartite graphs~\cite{sariyuce2016fast, fang2019efficient, gibson2005discovering, sariyuce2018local, angel2014dense, trussVLDB, lee2010survey, coreVLDB, coreVLDBJ, wang2018efficient}. 
Among these, motif-based techniques have gained tremendous interest in the graph community~\cite{PMID:16873465, trussVLDB, tsourakakis2017scalable,tsourakakis2014novel, sariyucePeeling, aksoy2017measuring, wang2010triangulation}. Motifs like triangles represent a quantum of cohesion in graphs. Hence, the number of motifs incident on a vertex or an edge is an indicator of their involvement in dense subgraphs. Several recent works are focused on efficiently finding such motifs in the graphs~\cite{ahmed2015efficient, shiParbutterfly, wangButterfly, hu2018tricore, fox2018fast, ma2019linc}.

Nucleus decomposition is one such popular motif-based dense graph mining technique. It considers the distribution of motifs across vertices or edges within a subgraph as an indicator of its density~\cite{10.1007/s10115-016-0965-5}, resulting in better quality dense subgraphs compared to motif counting~\cite{10.1007/s10115-016-0965-5, sariyuce2015finding}. For example, truss decomposition mines subgraphs called $k$-trusses, where every edge participates in at least $k-2$ triangles within the subgraph.
Truss decomposition also appears as one of the three tasks in the popular 
GraphChallenge~\cite{samsi2017static}, that has resulted in highly efficient parallel solutions scalable to 
billion edge graphs~\cite{date2017collaborative,voegele2017parallel,smith2017truss,green2017quickly}. 
However, such solutions cannot be directly applied on bipartite graphs due to the absence of triangles. 
Chakravarthy et al.\cite{chakaravarthy2018improved} propose a distributed truss decomposition algorithm that
trades off computational efficiency to reduce synchronization. This approach requires triangle 
enumeration and cannot be adopted for tip decomposition due to prohibitive space and computational 
requirements.

The simplest non-trivial motif in a bipartite graph is a Butterfly (2,2-biclique, quadrangle). Several algorithms covering various aspects of butterfly counting have been developed: in-memory or external memory~\cite{wangButterfly,wangRectangle}, exact or approximate counting~\cite{sanei2018butterfly,sanei2019fleet} and parallel counting on various platforms~\cite{shiParbutterfly, wangButterfly, wangRectangle}. Particularly, the in-memory algorithms for exact counting are relevant to our work. Wang et al.\cite{wangRectangle} count rectangles in bipartite graphs by traversing wedges with $\mathcal{O}\left(\sum_{u\in W}d_u^2\right)$ complexity. Sanei-Mehri et al.\cite{sanei2018butterfly} reduce this complexity to $\mathcal{O}\left(\min{\sum_{u\in U}d_u^2, \sum_{v\in V}d_v^2}\right)$ by choosing the vertex set where fewer wedges have end points.

Before the aforementioned works, Chiba and Nishizeki \cite{chibaArboricity} had proposed an $\mathcal{O}\left(\alpha\cdot m\right)$ complexity vertex-priority based quadrangle counting algorithm for generic graphs. Wang et al.\cite{wangButterfly} further propose a cache optimized variant of this algorithm and use shared-memory parallelism for acceleration. Independently, Shi et al.\cite{shiParbutterfly} develop provably efficient shared-memory parallel implementations of vertex-priority based counting. All of these approaches are amenable for per-vertex or per-edge \looseness=-1counting.

Butterfly based decomposition, albeit highly effective in finding quality dense regions in bipartite graphs, is computationally much more expensive than counting. Sariyuce et al.\cite{sariyucePeeling} defined $k$-tips and $k$-wings as subgraphs with minimum $k$ butterflies incident on every vertex and edge, respectively. Correspondingly, they defined the problems of Tip decomposition and Wing decomposition. Their algorithms use bottom-up peeling with respective complexities of $\mathcal{O}\left(\sum_{v\in V}d_v^2\right)$ and $\mathcal{O}\left(\sum_{(u,v)\in E}{\sum_{w\in N_v}{\left(d_u + d_w\right)}}\right)$. Independently, Zou~\cite{zouBitruss} defined the notion of bitruss similar to $k$-wing and showed its utility for bipartite network analysis. Shi et al.\cite{shiParbutterfly} propose parallel bottom-up peeling used as a baseline in our evaluation. Their wing decomposition uses hash tables to store edges and has a complexity of
$\mathcal{O}\left(\sum_{(u,v)\in E}{\sum_{w\in N_v}{\min{\left(d_u, d_w\right)}}}\right)$.

In their seminal paper, Chiba and Nishizeki \cite{chibaArboricity} had remarked that butterflies can be represented by storing the $\mathcal{O}(\alpha\cdot m)$ triples traversed during counting. In a very recent work, Wang et al.\cite{wangBitruss}  store these triples in the form of  maximal \textit{bloom} (biclique) structures, that enables quick retrieval of butterflies shared between edges. Based on this indexing, the authors develop the Bit-BU algorithm for peeling bipartite networks. To the best of our knowledge, it is the most efficient sequential algorithm that can perform wing decomposition in $\mathcal{O}\left(\bowtie_G\right)$ time. Yet, it takes more than $15$ hours to process the Livejournal~(\textit{Lj}) dataset whereas RECEIPT can tip decompose both \textit{LjU} and \textit{LjV} in less than $16$ minutes. Moreover, the bloom-based indexing has a non-trivial space requirement of $\mathcal{O}\left(\sum_{(u,v)\in E}{\min{\left(d_u, d_v\right)}}\right)$ which in practice, amounts to hundreds of gigabytes for datasets like Orkut(\textit{Or}). Comparatively, RECEIPT has a space-complexity of $\mathcal{O}\left(n\cdot T + m \right)$ (same as parallel counting \cite{wangButterfly}) and consumes only few gigabytes for all the datasets used in sec.\ref{sec:exp}. We also note that the fundamental ideas employed in RECEIPT and Bit-BU are complimentary. While RECEIPT attempts to exploit parallelism across $k$-tip hierarchy and reduce the problem  size for exact bottom-up peeling, Bit-BU tries to make peeling every edge more efficient. We believe that an amalgamation of our ideas with \cite{wangBitruss} can produce highly scalable parallel solutions for peeling large bipartite graphs.

\section{Extensions}\label{sec:extensions}
In this section, we discuss opportunities for future research in the context of our work:
\begin{itemize}[leftmargin=*]
    \itemsep0em
    \item \textit{Parallel Edge peeling:} RECEIPT can be easily adapted for parallel wing decomposition (edge peeling) in bipartite graphs \cite{wangBitruss, sariyucePeeling}. The workload reduction optimizations in RECEIPT can have a greater impact on edge peeling due to the higher complexity and smaller range of wing numbers. However, there could be conflicts during parallel edge peeling as multiple edges in a butterfly could get deleted in the same iteration. Only one of the peeled edges should update the support of other edges in the butterfly, which can be achieved by imposing a priority ordering of edges. 
    
    \item \textit{Distributed Tip Decomposition:} Distributed-memory systems (like multi-node clusters) offer large amount of extendable computational resources and are widely used to scale high complexity graph analytics~\cite{chakaravarthy2018improved, bhattarai2019ceci,lakhotia13planting}.
    
    We believe that the fundamental concept of creating independent tip number ranges and vertex subsets can be very useful in exposing parallelism for distributed-memory algorithms. In the past, certain distributed algorithms for graph processing have achieved parallel scalability by creating independent parallel tasks \cite{lakhotia13planting, arifuzzaman2013patric}.
    \kledit{However, support updates generated from peeling may need to be communicated on the network. This can affect scalability of the algorithm because of high communication cost in clusters~\cite{lumsdaine2007challenges, mcsherry2015scalability}. Further, execution on distributed systems may exhibit load imbalance due to larger number of threads and limitations of dynamic task scheduling across processes.}
    
    \item \textit{System Optimizations:} In this work, we have particularly focused on algorithmic optimizations. However, other aspects such as memory access optimizations~\cite{wei2016speedup,lakhotia2018accelerating} and SIMD parallelism~\cite{10.1145/3183713.3196924}  have significant impact on graph analytics. Enhancing memory access locality can also mitigate the NUMA effects that limit parallel speedup (sec.\ref{sec:scalability}).
    
    RECEIPT CD and RECEIPT FD exploit parallelism at vertex and subgraph granularity, respectively. Using fine-grained parallelism at edge or wedge granularity can further improve load balance.
    
    
\end{itemize}



\section{Conclusion}

In this paper, we proposed RECEIPT -- a novel shared-memory parallel algorithm for tip decomposition. RECEIPT is the first algorithm that exploits the massive parallelism across different levels of $k$-tip hierarchy. Further, we also developed pragmatic optimizations to drastically improve the computational efficiency of RECEIPT. 

We empirically evaluated our approach on a shared-memory multicore server, and showed that it can process some of the largest publicly available bipartite datasets orders of magnitude faster than the state-of-the-art algorithms, with dramatic reduction in synchronization and wedge traversal. Using $36$ threads, RECEIPT can provide up to $17.1\times$ self-relative speedup, being much more scalable then the best available parallel algorithms for tip decomposition.

We also explored the generalizability of RECEIPT for wing decomposition (edge peeling) and several interesting avenues for future work in this direction. 
We believe that scalable algorithms for parallel systems such as many-core CPUs, GPU or HPC clusters, can enhance the applicability of tip or wing decomposition, and our work is a step in that direction.


\clearpage\newpage
\bibliographystyle{abbrv}
\bibliography{bibliography}  


\end{document}